\documentclass[12pt]{article}
\usepackage[margin=1in]{geometry}
\usepackage{amsmath, amssymb, amsthm, bm}
\usepackage{hyperref}
\usepackage{natbib}
\usepackage{graphicx}
\usepackage{booktabs}
\usepackage{threeparttable}

\title{Who Matters to Whom? Identifying Peer Effects with Propagation Geometry\\}
\author{Guy Tchuente}
\date{February 2026}

\newcommand{\E}{\mathbb{E}}

\newcommand{\R}{\mathbb{R}}
\newcommand{\mc}{\mathcal}

\theoremstyle{plain}
\newtheorem{assumption}{Assumption}

\newtheorem{remark}{Remark}
\newtheorem{theorem}{Theorem}

\begin{document}
\maketitle

\begin{abstract}
This paper develops a unifying theory of peer effects that treats the peer aggregator (the social norm mapping peers' actions into a scalar exposure) as the central behavioral primitive.
We formulate peer influence as a norm game in which payoffs depend on own action and an exposure index, and we provide equilibrium existence and uniqueness for a broad class of aggregators. Using economically interpretable axioms, we organize commonly used exposure maps into a small taxonomy that nests
linear-in-means, CES (peer-preference) norms, and smooth ``attention-to-salient-peers'' aggregators; rank-based quantile norms are treated as a complementary class. Building on this unification, we show that each aggregator induces an operator that governs how exogenous variation propagates through the network. Linear-in-means corresponds to constant transport (adjacency matrix), recovering the classic (friends-of-friends) instrument families. For nonlinear norms, operator becomes state- and preference-dependent and is characterized by the Jacobian of the exposure map evaluated at an exogenous predictor. This perspective yields geometry-induced instrument that exploit heterogeneity in marginal influence and nonredundant paths, and can remain informative when one-step moments or adjacency-power instruments become weak. Monte Carlo evidence and an application to NetHealth illustrate the practical implications across
alternative aggregators and outcomes.
\end{abstract}

\bigskip
\noindent\textbf{JEL classification:} C31; C36; C57; D85.

\smallskip
\noindent\textbf{Keywords:} peer effects; social interactions; instrumental variables; weak identification; propagation geometry.

\newpage

\section{Introduction}

Peer effects are central to empirical work in education, health, crime, and technology adoption, where an individual's outcome responds to the outcomes or actions of peers connected through social, spatial, or organizational networks.
The econometric challenge is that peer exposure is endogenous: it is mechanically correlated with unobservables (the ``reflection problem'') and with shared shocks, so identification hinges on generating excluded variation that shifts peer exposure without directly shifting the outcome \citep{manski1993identification}.
A large empirical literature has therefore adopted the linear-in-means (LIM) model and its network generalizations, because LIM delivers transparent best-response foundations and a tractable class of instruments based on higher-order neighbor exposures such as $G^2X$ and $G^kX$ (see \citealp{bramoulle2009identification,blume2011identification,goldsmith2013social}).
However, two facts are increasingly hard to ignore.
First, the network topologies that arise in applications---high transitivity, community structure, and near-regular degree profiles---are precisely those in which adjacency-power instruments can become nearly collinear and weak.
Second, LIM hard-codes a behavioral assumption: the relevant peer benchmark is the \emph{mean} peer outcome, ruling out salience (``the best peer matters most''), downside comparison (``the weakest peer matters''), and rank-based norms (``keep up with the median/upper quartile'').

Recent work has begun to relax the mean benchmark by treating the peer aggregator (the ``social norm'') as an estimand.
\citet{boucher2024toward} (BRUZ) develop a general peer-effects model in which the relevant peer benchmark is a CES aggregator with curvature parameter $\beta$, nesting LIM as $\beta=1$ and converging to max/min norms as $\beta\to\pm\infty$.
In their framework, $\beta$ is economically meaningful---it governs \emph{who matters}---and identifying it is part of the empirical task.
They propose a one-step IV strategy based on scalar summaries of an exogenous predictor $\hat y=m(X)$ and derivatives of the aggregator with respect to $\beta$.

This paper provides a structural unification that nests (i) the LIM network model of \citet{bramoulle2009identification}, (ii) CES peer-preference norms as in \citet{boucher2024toward}, and (iii) ``attention to salient peers'' logic that underlies Brock--Durlauf-type social interactions \citep{brock2001discrete}.
Section~\ref{sec:norm_games} formulates peer effects as a \emph{norm game}: individuals choose actions, payoffs depend on own action and a scalar peer exposure index $N_i=\Phi_i(a_{-i};G,\theta)$, and equilibria are fixed points of best responses.
Rather than impose a single functional form, we treat the peer exposure map $\Phi_i$ as a primitive and discipline it using economically interpretable axioms.
Under mild regularity, the axioms yield the Kolmogorov--Nagumo class of quasi-arithmetic means, $\Phi_i(a)=\varphi^{-1}(\sum_j g_{ij}\varphi(a_j))$, providing a common lens on a large fraction of peer-exposure objects used in economics.
Within this lens, LIM corresponds to the linear generator $\varphi(a)=a$, CES corresponds to the power generator $\varphi(a)=a^\beta$, and a smooth-max ``attention'' norm arises from log-sum-exp aggregation, which also has classic discrete-choice and entropy-regularization interpretations \citep{mcfadden1974conditional,luce1959individual,matejka2015rational,nesterov2005smooth}.
The appendix adds a complementary class of rank-based norms via loss minimization (weighted quantiles/medians), which are natural for robustness and for Manski-style nonparametric arguments \citep{koenker1978quantiles,manski1993identification}.
Section~\ref{sec:norm_games} also provides equilibrium existence for this unified class (and uniqueness under transparent ``weak interaction'' conditions), so the peer-preference parameter $\theta$ is a genuine behavioral primitive rather than an ad hoc index.

A constructive identification toolkit follows naturally from the norm-game taxonomy.
The central insight is that \emph{propagation is governed by a transport operator induced by the peer aggregator}.
In LIM the transport is constant and equals adjacency ($P=G$), so instrument design reduces to repeated linear propagation ($G^kX$).
In CES and other smooth aggregators, the exposure map implies a \emph{state- and preference-dependent} Jacobian transport operator evaluated at an exogenous predictor $\hat y$:
\[
w_{ij}(\theta;\hat y)=\left.\frac{\partial \Phi_i(y;G,\theta)}{\partial y_j}\right|_{y=\hat y},
\qquad
P_{ij}(\theta;\hat y)=\frac{w_{ij}(\theta;\hat y)}{\sum_{m\neq i}w_{im}(\theta;\hat y)}.
\]
This operator encodes ``who matters to whom'' under the maintained aggregator, and it is the appropriate object for multi-step propagation when the model is not linear-in-means.
Section~\ref{sec:norm_to_np} shows that once $P(\theta;\hat y)$ is taken seriously as the propagation primitive, instrument construction becomes a geometry problem rather than an algebra-of-$G$ problem:
the analogue of $G^kX$ is $P^k(\theta;\hat y)X$, and the transport induces additional geometry objects---effective distances (shells/geodesics) and path nonredundancy (torsion)---that generate new excluded-variation directions.
These directions are informative precisely in settings where one-step scalar moments (BRUZ-type summaries) or adjacency-power instruments become weak due to transitivity, regularity, and path redundancy.

This paper relates to the econometrics of social interactions and the identification of peer effects, beginning with the reflection problem \citep{manski1993identification} and extending to network-based identification strategies such as \citet{bramoulle2009identification} and subsequent work on network instruments and their limitations (see also \citealp{blume2011identification,goldsmith2013social,boucherBramoulle2026binary}). Relatedly, \citet{tchuente2019weak} emphasizes that network-IV strategies may require many or highly correlated instruments,
leading to weak identification or many-instruments bias, and proposes regularized 2SLS estimators as a remedy.
It complements structural and reduced-form literatures that emphasize heterogeneity in ``which peers matter,'' including the CES peer-preference approach of \citet{boucher2024toward} and recent results showing that allowing richer action structures can fundamentally change shock propagation in network games \citep{sadlerYeh2026}.
It also connects to the discrete-choice and information-theoretic foundations of attention/salience norms \citep{luce1959individual,mcfadden1974conditional,matejka2015rational} and to quantile-based loss minimization \citep{koenker1978quantiles}, as well as to recent work defining network structure behaviorally via equilibrium implications \citep{jacksonStorms2026}.
More broadly, it speaks to the econometric analysis of network data and strategic interaction models \citep{graham2020econometric,kline2020econometric,rootSadler2026}, by providing a constructive route from a maintained aggregator to an instrument menu that matches the implied influence geometry.

Section~\ref{sec:norm_games} develops the norm-game framework and the taxonomy of peer aggregators, and establishes equilibrium existence (and uniqueness under weak interactions).
Section~\ref{sec:norm_to_np} develops geometry-induced instrument families implied by the aggregator-induced transport.
Section~\ref{sec:manski_to_geometry} discusses identification and connects completeness-type conditions to the enlarged excluded information set generated by transport geometry.
Section~\ref{sec:mc} provides Monte Carlo evidence.
Section~\ref{sec:nethealth} applies the framework to NetHealth and compares CES, smooth-max, and quantile norms across steps, sleep, and GPA.
Section~\ref{sec:conclusion} concludes and discusses extensions.
\section{Norm games and a unifying class of peer aggregators}
\label{sec:norm_games}

This section builds a unifying structural framework that nests:
(i) the linear-in-means network model of \citet{bramoulle2009identification},
(ii) generalized peer norms with peer-preference parameters as in \citet{boucher2024toward}, and
(iii) the ``attention to salient peers'' logic that underlies Brock--Durlauf-type social interactions
\citep{brock2001discrete}.
We isolate a small set of primitives, state economically interpretable aggregator classes,
and provide equilibrium existence (and uniqueness under transparent ``weak interaction'' conditions).
The appendix then adds an additional aggregator class (quantile/median norms) and additional technical proofs. 

\subsection{Primitives: actions, network, and a peer aggregator}
\label{subsec:primitives}

Individuals $i=1,\dots,n$ choose an action $a_i$ in a feasible set $A_i\subset \mathbb{R}$ (extensions to $\mathbb{R}^d$ are immediate).
Let $a=(a_1,\dots,a_n)$ and $a_{-i}$ denote the vector without component $i$.

The network is summarized by a row-stochastic interaction matrix $G=(g_{ij})$ with $g_{ij}\ge 0$ and $\sum_{j\neq i} g_{ij}=1$
(when observations are grouped, e.g.\ schools $s$, all objects can be indexed by $s$; we suppress $s$ for readability.)

\paragraph{Peer aggregator (norm/exposure).}
We model peer influence through a scalar peer exposure index
\begin{equation}\label{eq:peer_agg_general}
N_i(a_{-i}) \;=\; \Phi_i(a_{-i};G,\theta),
\end{equation}
where $\Phi_i$ is an aggregator mapping peers' actions to a scalar, and $\theta$ indexes \emph{peer preference/salience}
(e.g.\ a CES parameter $\beta$ or an attention parameter $\kappa$).

\paragraph{Payoffs.}
Preferences depend on own action and peer exposure:
\begin{equation}\label{eq:utility_general}
u_i(a_i,a_{-i})
\;=\;
U_i\!\big(a_i,\; N_i(a_{-i}),\; x_i\big),
\end{equation}
where $x_i$ are observables (and unobservables can be added additively).

\paragraph{Nash equilibrium.}
A pure-strategy Nash equilibrium is $a^\star\in \prod_i A_i$ such that, for each $i$,
\begin{equation}\label{eq:nash_def}
a_i^\star\in \arg\max_{a_i\in A_i} U_i\!\Big(a_i,\;\Phi_i(a^\star_{-i};G,\theta),\;x_i\Big).
\end{equation}

\subsection{A taxonomy of peer aggregators}
\label{subsec:taxonomy}

Rather than fix a single functional form, we treat the \emph{peer exposure map} as a primitive:
for each $i$, let
\[
N_i \;=\;\Phi_i(a_{-i};g_i),
\qquad g_i=(g_{ij})_{j\neq i},\ \ g_{ij}\ge 0,\ \ \sum_{j\neq i} g_{ij}=1,
\]
where $a_j$ is a peer action/outcome and $g_{ij}$ are interaction weights. We then discipline $\Phi_i$
using economically transparent \emph{axioms} that appear (explicitly or implicitly) across the peer-effects,
aggregation, and discrete-choice literatures.

\paragraph{Axioms that deliver a ``mean-like'' class (Kolmogorov--Nagumo).}
A large share of peer-exposure objects used in economics can be motivated as a notion of ``weighted mean''
satisfying four basic requirements:
\begin{enumerate}
  \item[(A1)] \textbf{Anonymity / symmetry (given weights):} relabeling peers leaves $N_i$ unchanged.
  \item[(A2)] \textbf{Monotonicity:} if $a_j$ increases (holding others fixed), then $N_i$ does not decrease.
  \item[(A3)] \textbf{Idempotence:} if all peers choose the same level $a$, then $\Phi_i(a,\dots,a)=a$.
  \item[(A4)] \textbf{Decomposability / associativity:} aggregating peers in two stages (first within subgroups,
  then across subgroups) gives the same result as aggregating all at once.
\end{enumerate}
Under mild regularity (continuity and strict monotonicity), (A1)--(A4) characterize the class of
\emph{quasi-arithmetic means}:
\begin{equation}\label{eq:quasi_arithmetic}
\Phi_i(a_{-i};g_i)\;=\;\varphi^{-1}\!\Big(\sum_{j\neq i} g_{ij}\,\varphi(a_j)\Big)
\quad\text{for some strictly increasing }\varphi,
\end{equation}
a classic result associated with Kolmogorov and Nagumo and presented in standard functional-equations treatments
(e.g.\ \citet{aczel1966lectures}).\footnote{Many modern summaries refer to this as the Kolmogorov--Nagumo theorem
for (weighted) quasi-arithmetic means; see, e.g., \citet{singpurwalla2020mean} for an accessible discussion.}
Equation \eqref{eq:quasi_arithmetic} provides a \emph{unifying lens}: different economic models correspond to different
choices of the generator $\varphi$ (and sometimes to additional structure restricting $\varphi$).

\paragraph{From ``mean-like'' to economically interpretable families.}
We now list four families that are both empirically useful and microfounded by simple strengthening of the axioms.

\begin{enumerate}
\item \textbf{Mean / linear-in-means (LIM).}
Add \emph{translation invariance} (adding a constant to all peers shifts $N_i$ by the same constant) and
\emph{homogeneity of degree one} (scaling all peers scales $N_i$).
Within \eqref{eq:quasi_arithmetic}, these restrictions pin down (up to affine normalization) $\varphi(a)=a$,
so $\Phi_i$ becomes the weighted arithmetic mean. This is the benchmark exposure used in spatial-lag and LIM peer-effects
models \citep{bramoulle2009identification}.

\item \textbf{Power mean / CES norm (peer preferences).}
Impose instead that peer actions are \emph{homothetic} and that the aggregator has \emph{constant elasticity of substitution}
across peers. This leads to the CES/power-mean family, with generator $\varphi(a)=a^\beta$ (after normalization):
\[
\Phi_i^{\text{CES}}(a_{-i})
\;=\;
\Big(\sum_{j\neq i} g_{ij}\,a_j^\beta\Big)^{1/\beta}.
\]
This is exactly the ``peer preference'' norm in \citet{boucher2024toward}, and it is the same functional family
popularized in economics by the CES tradition \citep{arrow1961capital, dixit1977monopolistic}. Economically,
$\beta$ governs \emph{concentration/salience}: $\beta>1$ makes high-$a_j$ peers disproportionately influential
(max-like), while $\beta<1$ makes low-$a_j$ peers disproportionately influential (min-like).

\item \textbf{Smooth-max (attention / extreme-peer salience).}
If one wants (i) \emph{extreme-peer salience} (the norm approaches $\max_j a_j$ as attention sharpens) and
(ii) \emph{smoothness} (differentiability in peers' actions, useful for comparative statics and estimation),
a canonical choice is the log-sum-exp (a smooth approximation to $\max$):
\[
\Phi_i^{\text{smax}}(a_{-i})
\;=\;
\kappa^{-1}\log\sum_{j\neq i} g_{ij}\exp(\kappa a_j),
\qquad \kappa>0.
\]
This object has two widely used economic interpretations.
First, it is the ``inclusive value'' in logit-type discrete choice models (random utility with i.i.d.\ Gumbel shocks),
grounded in classic choice axioms and the logit structure \citep{mcfadden1974conditional}.
Second, it appears as the value of an entropy-regularized max problem (a rational-inattention / Gibbs variational form),
connecting ``attention'' to Shannon entropy penalties \citep{matejka2015rational}.
From a mathematical perspective, log-sum-exp is also the standard smooth approximation of $\max$ used in convex optimization
(e.g.\ \citet{nesterov2005smooth}), which is convenient because it yields simple, interpretable marginal influence weights.

\item \textbf{Quantile / median norms (robust, nonparametric-friendly).}
For Manski-style nonparametric reasoning, it is often natural to replace ``average pressure'' by a \emph{status/rank target},
e.g.\ ``keep up with the median'' or ``stay above the $q$-th percentile.'' A clean axiom here is \emph{loss-minimization}:
define $N_i$ as the minimizer of an asymmetric absolute loss (the ``check loss''):
\begin{equation}\label{eq:quantile_norm}
\Phi_i^{(q)}(a_{-i})
\;:=\;
\arg\min_{t\in\mathbb{R}}
\sum_{j\neq i} g_{ij}\,\rho_q(a_j-t),
\qquad
\rho_q(u)=u\big(q-\mathbf{1}\{u<0\}\big),
\end{equation}
which yields the (weighted) $q$-quantile of peers' actions; $q=\tfrac12$ gives the weighted median.
This characterization is standard in the quantile regression literature \citep{koenker1978quantiles}.
Quantile norms are appealing in peer-effects settings because they are robust to outliers and naturally compatible with
nonparametric identification discussions in the spirit of \citet{manski1993identification}.
\end{enumerate}

\begin{table}[!htbp]\centering
\caption{A taxonomy of peer aggregators (peer exposure maps) and their axiomatic/microfounded motivation}
\label{tab:taxonomy}
\begin{tabular}{p{3.7cm} p{5.5cm} p{7.5cm}}
\hline
Category & Peer exposure $N_i=\Phi_i(a_{-i})$ & Economic interpretation / where it appears \\ \hline
Mean (LIM) &
$(Ga)_i=\sum_{j\neq i} g_{ij}a_j$ &
Benchmark ``average peer pressure''; linear-in-means / spatial lag; arises from quasi-arithmetic means with linear generator
\citep{bramoulle2009identification}. \\[4pt]
Power mean (CES) &
$\Big(\sum_{j\neq i} g_{ij}a_j^\beta\Big)^{1/\beta}$ &
Peer preference / salience; $\beta$ tunes mean vs.\ max/min; constant-elasticity substitution view; used in \citet{boucher2024toward};
rooted in CES aggregation tradition \citep{arrow1961capital,dixit1977monopolistic}. \\[4pt]
Smooth-max (attention) &
$\kappa^{-1}\log\sum_{j\neq i} g_{ij}\exp(\kappa a_j)$ &
Attention to extreme peers (smooth max). Inclusive value in logit/random utility \citep{mcfadden1974conditional,luce1959individual};
also entropy-regularized max / rational inattention \citep{matejka2015rational}; smooth-max in optimization \citep{nesterov2005smooth}. \\[4pt]
Quantile/median norm &
$\arg\min_t\sum_{j\neq i} g_{ij}\rho_q(a_j-t)$ &
Rank/status-based norms (``keep up with the median/percentile''); robust to outliers; convenient for nonparametric Manski-style arguments
\citep{koenker1978quantiles,manski1993identification}. \\ \hline
\end{tabular}
\end{table}

\noindent
\textbf{Why this taxonomy matters for identification and geometry.}
The point of Table \ref{tab:taxonomy} is not only flexibility; it is \emph{structure}.
For quasi-arithmetic families \eqref{eq:quasi_arithmetic} (including LIM, CES, smooth-max), the model implies a
\emph{marginal influence field} (a Jacobian with respect to peers' actions), which is the primitive object behind
our propagation/geometry-based instruments. Quantile norms \eqref{eq:quantile_norm} are non-smooth but still admit
subgradient-based influence notions (useful for robustness and for Manski-style partial identification), which we develop in the appendix.

\subsection{Example of aggregator: CES peer norms (BRUZ)}
\label{subsec:ces_norms}

The main text uses the CES (power-mean) norm as the workhorse because it is (i) widely interpretable for economists and
(ii) directly connected to the peer-preference estimand in \citet{boucher2024toward}.
For each $i$,
\begin{equation}\label{eq:ces_norm_actions}
\Phi_i^{\text{CES}}(a_{-i};G,\beta)
\;=\;
\left(\sum_{j\neq i} g_{ij}\, a_j^{\beta}\right)^{1/\beta},\qquad \beta\in\mathbb{R}\setminus\{0\}.
\end{equation}
As $\beta\to 1$ this is close to the mean; as $\beta\to+\infty$ it approaches the max; as $\beta\to-\infty$ it approaches the min.
Thus $\beta$ is a \emph{peer-salience} parameter: it governs how concentrated influence is among neighbors.

\paragraph{Domain.}
For general $\beta$, \eqref{eq:ces_norm_actions} is naturally defined on $a_j>0$.
This is not restrictive for many outcomes: one may model $a_j=\exp(s_j)$ (log-actions), or shift/scale the outcome to be positive.

\subsection{Existence of a Nash equilibrium (general aggregator)}
\label{subsec:existence}

We now state an equilibrium existence theorem that applies to \emph{all} aggregators in Table~\ref{tab:taxonomy},
including the CES norm and the smooth-max norm, under standard continuity and concavity conditions.

\begin{theorem}[Existence of pure-strategy Nash equilibrium]\label{thm:existence_norm_game}
Suppose:
\begin{enumerate}
\item[(E1)] For each $i$, $A_i$ is nonempty, compact, and convex.
\item[(E2)] For each $i$, the aggregator $\Phi_i(\cdot;G,\theta)$ is continuous on $\prod_{j\neq i}A_j$.
\item[(E3)] For each $i$, $U_i(a_i,n,x_i)$ is continuous in $(a_i,n)$ for every $x_i$.
\item[(E4)] For each $i$ and fixed $(n,x_i)$, the function $a_i\mapsto U_i(a_i,n,x_i)$ is concave on $A_i$.
\end{enumerate}
Then there exists at least one pure-strategy Nash equilibrium $a^\star$ satisfying \eqref{eq:nash_def}.
\end{theorem}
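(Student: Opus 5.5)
The plan is to apply Kakutani's fixed-point theorem to the joint best-response correspondence, following the Debreu--Glicksberg--Fan argument. The key step is to reduce the norm game to an ordinary game with continuous, own-concave payoffs. For each $i$, define the composite payoff
\[
f_i(a_i,a_{-i}) \;=\; U_i\big(a_i,\Phi_i(a_{-i};G,\theta),x_i\big),
\qquad (a_i,a_{-i})\in A:=\prod_{j}A_j ,
\]
with $x_i$ and $\theta$ held fixed. By (E2), $a_{-i}\mapsto \Phi_i(a_{-i};G,\theta)$ is continuous on $\prod_{j\neq i}A_j$. The map $(a_i,a_{-i})\mapsto (a_i,\Phi_i(a_{-i}))$ is therefore continuous, and composing it with $U_i$, which is jointly continuous in $(a_i,n)$ by (E3), makes $f_i$ jointly continuous on $A$. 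By (E4), for each fixed $a_{-i}$ the section $a_i\mapsto f_i(a_i,a_{-i})$ is concave on $A_i$, since the second argument is just a constant $n=\Phi_i(a_{-i})$. All the structure specific to the aggregator is absorbed in this first step. Nothing beyond continuity of $\Phi_i$ is used, so the argument covers LIM, CES (on a positive compact action set), smooth-max, and quantile norms, as long as the relevant quantile map is continuous.

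Next, I will define the best-response correspondence $B_i(a_{-i})=\arg\max_{a_i\in A_i} f_i(a_i,a_{-i})$ and the joint correspondence $B(a)=\prod_i B_i(a_{-i})$ from $A$ to $A$. Three properties are needed.
\begin{enumerate}
\item \textbf{Nonempty values.} Each $B_i(a_{-i})$ is nonempty because $A_i$ is nonempty and compact by (E1) and $f_i(\cdot,a_{-i})$ is continuous (Weierstrass).
\item \textbf{Convex values.} Each $B_i(a_{-i})$ is convex: the argmax of a concave function over a convex set is convex, since any convex combination of two maximizers attains at least the maximal value. Quasi-concavity would already suffice here.
\item \textbf{Closed graph.} Berge's maximum theorem, applied with the constant (hence continuous), compact-valued constraint correspondence $a_{-i}\mapsto A_i$ and the jointly continuous objective $f_i$, shows that $B_i$ is upper hemicontinuous with compact values. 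Hence $B_i$ has a closed graph, and so does the product $B$.
\end{enumerate}

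Finally, $A$ is a nonempty, compact, convex subset of $\mathbb{R}^n$ because it is a finite product of such sets. Kakutani's theorem then yields $a^\star\in B(a^\star)$, that is, $a_i^\star\in B_i(a^\star_{-i})$ for every $i$, which is exactly \eqref{eq:nash_def}.

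The main obstacle is the first step, specifically the joint continuity of $f_i$. Hypothesis (E3) must be read as joint continuity of $U_i$ in $(a_i,n)$; separate continuity would not be enough for Berge's theorem. In addition, $U_i$ must be defined and continuous on a set containing $A_i\times\Phi_i(\prod_{j\neq i}A_j)$. By (E2) and compactness, $\Phi_i(\prod_{j\neq i}A_j)$ is a compact interval, so the requirement is simply that (E3) hold on that range. For CES with $\beta<0$, this is where the domain restriction $a_j>0$ from Section~\ref{subsec:ces_norms} enters, via compact $A_j\subset(0,\infty)$. I would state these readings explicitly at the start of the proof; once they are in place, the remaining steps are routine.
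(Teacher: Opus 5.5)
Your proposal is correct and follows the paper's proof essentially step for step. Like the paper, you use Weierstrass for nonempty best responses, concavity (E4) for convex values, Berge's maximum theorem for upper hemicontinuity, and Kakutani on the product correspondence over $A=\prod_i A_i$; your composite payoff $f_i$ and the reading of (E3) as joint continuity in $(a_i,n)$ make explicit the joint continuity in $(a_i,a_{-i})$ that the paper's appeal to Berge uses only implicitly.
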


\begin{proof}
Define the best-response correspondence
\[
BR_i(a_{-i})
\;=\;
\arg\max_{a_i\in A_i} U_i\!\Big(a_i,\Phi_i(a_{-i};G,\theta),x_i\Big).
\]
Fix $a_{-i}$. By (E2) and (E3), the objective is continuous in $a_i$.
By compactness of $A_i$ (E1), Weierstrass' theorem implies maximizers exist, so $BR_i(a_{-i})$ is nonempty and compact.

By (E4), the objective is concave in $a_i$, hence the set of maximizers over a convex set is convex.
Therefore $BR_i(a_{-i})$ is convex-valued.

Consider the product correspondence $BR(a)=\prod_{i=1}^n BR_i(a_{-i})$ on $A=\prod_i A_i$.
$A$ is nonempty, compact, and convex by (E1).
By Berge's Maximum Theorem, each $BR_i(\cdot)$ is upper hemicontinuous; therefore so is $BR(\cdot)$.
By Kakutani's Fixed Point Theorem, there exists $a^\star\in A$ such that $a^\star\in BR(a^\star)$.
This is exactly the definition of a Nash equilibrium.
\end{proof}

\paragraph{Interpretation.}
Existence is not special to a particular peer norm. It is a consequence of standard economic regularity:
bounded feasible actions, continuous peer exposure, and concavity in own action.

\subsection{Uniqueness of the Nash equilibrium under weak interactions (a contraction condition)}
\label{subsec:uniqueness}

Multiple equilibria are possible when peer effects are strong.
A sufficient uniqueness condition is a simple ``weak interaction'' restriction:
agents do not respond too strongly to changes in the peer exposure, and the exposure does not change too strongly with peers' actions.

Assume strict concavity in $a_i$ so best responses are single-valued and can be written as
\begin{equation}\label{eq:br_mapping}
a_i = b_i\!\Big(x_i,\ \Phi_i(a_{-i};G,\theta)\Big).
\end{equation}
Define the equilibrium operator $T:\prod_i A_i\to\prod_i A_i$ by
\begin{equation}\label{eq:T_operator}
(T(a))_i := b_i\!\Big(x_i,\ \Phi_i(a_{-i};G,\theta)\Big).
\end{equation}
Then equilibria are fixed points $a=T(a)$.

\begin{assumption}[Lipschitz sensitivity]\label{ass:lipschitz_main}
There exist constants $L_b\ge 0$, $L_\Phi\ge 0$ and a norm $\|\cdot\|$ on $\mathbb{R}^n$ such that for all $i$:
\begin{align}
\big|b_i(x_i,n)-b_i(x_i,n')\big| &\le L_b\,|n-n'|, \label{eq:lipschitz_b}\\
\big|\Phi_i(a_{-i};G,\theta)-\Phi_i(a'_{-i};G,\theta)\big| &\le L_\Phi\,\|a-a'\|. \label{eq:lipschitz_phi}
\end{align}
\end{assumption}

\begin{theorem}[Uniqueness by contraction]\label{thm:unique_contraction_main}
Under Assumption~\ref{ass:lipschitz_main}, if $L_b\,L_\Phi<1$, then $T$ is a contraction mapping.
Therefore the Nash equilibrium is unique.
\end{theorem}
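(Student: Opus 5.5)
The plan is to verify directly that $T$ in \eqref{eq:T_operator} satisfies a Lipschitz bound with constant $L_b L_\Phi<1$ on the product space $A=\prod_i A_i$. Then the Banach fixed point theorem gives a unique fixed point, and by construction the fixed points of $T$ are exactly the Nash equilibria.

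First, I will record that $T$ is a self-map of $A$. Under strict concavity, $b_i(x_i,n)$ is the unique maximizer over $A_i$, so $(T(a))_i\in A_i$. Next, I will check completeness. Each $A_i$ is compact by (E1), so $A$ is a closed and bounded subset of $\R^n$, and $(A,\|\cdot\|)$ is a complete metric space for any norm on $\R^n$.

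The core estimate chains the two inequalities of Assumption~\ref{ass:lipschitz_main}. For each $i$ and any $a,a'\in A$,
\[
|(T(a))_i-(T(a'))_i| \le L_b\,\big|\Phi_i(a_{-i};G,\theta)-\Phi_i(a'_{-i};G,\theta)\big| \le L_b L_\Phi\,\|a-a'\|.
\]
Taking the maximum over $i$ yields
\[
\|T(a)-T(a')\|_\infty\le L_bL_\Phi\|a-a'\|.
\]

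\textbf{Main obstacle: matching norms.} The left side of this estimate is the sup norm, while the right side uses the norm $\|\cdot\|$ of the assumption. To close the contraction argument, both sides must be in the same norm. I will handle this by taking $\|\cdot\|=\|\cdot\|_\infty$ in \eqref{eq:lipschitz_phi}, which I read as the intended choice. It is also the natural one for the taxonomy. For quasi-arithmetic means \eqref{eq:quasi_arithmetic} with smooth generator, the Jacobian weights $\partial\Phi_i/\partial a_j$ are nonnegative and sum to one by idempotence (A3) and monotonicity (A2). The mean value theorem then gives $L_\Phi=1$ in the sup norm, so the condition reduces to $L_b<1$. With this choice the estimate becomes $\|T(a)-T(a')\|_\infty\le L_bL_\Phi\|a-a'\|_\infty$, so $T$ is a contraction.

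If a general norm must be kept, the cleanest fallback is to strengthen the hypothesis rather than the argument. For a general norm one would need a bound of the form $\|T(a)-T(a')\|\le L_bL_\Phi\|a-a'\|$, which does not follow from componentwise bounds without norm-equivalence constants. The proof will therefore state explicitly that the norm is the sup norm, or more generally any monotone norm for which componentwise domination by $c\|a-a'\|$ implies a norm bound of $c\|a-a'\|$.

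Finally, Banach's theorem gives a unique $a^\star\in A$ with $T(a^\star)=a^\star$. Since under strict concavity $a$ is a Nash equilibrium iff $a_i=b_i(x_i,\Phi_i(a_{-i}))$ for all $i$, the equilibrium is unique. As a byproduct, iteration $a^{(k+1)}=T(a^{(k)})$ converges geometrically to $a^\star$ at rate $L_bL_\Phi$.
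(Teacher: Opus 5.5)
Your argument is the paper's: chain the two Lipschitz inequalities to get $|(T(a))_i-(T(a'))_i|\le L_bL_\Phi\|a-a'\|$ for each $i$, pass to a norm, and invoke Banach.

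Your concern about matching norms is well placed, and it genuinely tightens the paper's proof. The step ``taking the norm over $i$'' yields $\|T(a)-T(a')\|\le L_bL_\Phi\|a-a'\|$ only for the sup norm, or for a norm with the property you describe. For $\ell_p$ with $p<\infty$ it costs a factor $n^{1/p}$. Fixing $\|\cdot\|=\|\cdot\|_\infty$ is also consistent with how the paper actually verifies the assumption: the LIM remark and both appendix bounds are stated in $\ell_\infty$. Note that the uniqueness conclusion needs no completeness, since two fixed points would satisfy $\|a-a'\|\le L_bL_\Phi\|a-a'\|$. You need (E1) only if you also want existence from Banach.

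One claim in your proposal is false, although nothing in the proof depends on it. Smooth quasi-arithmetic means do not in general have Jacobian weights summing to one, so they do not all have $L_\Phi=1$ in the sup norm.
\begin{itemize}
\item Idempotence only gives $\sum_j\partial\Phi_i/\partial a_j=1$ at diagonal points $a=c\mathbf{1}$. A unit gradient sum everywhere is translation equivariance, $\Phi_i(a+t\mathbf{1})=\Phi_i(a)+t$, which LIM and smooth-max satisfy but CES does not.
\item For CES the gradient sum is $(M_{\beta-1}/M_\beta)^{\beta-1}$, where $M_r$ is the $g_i$-weighted power mean of order $r$. This exceeds one off the diagonal whenever $\beta<1$.
\item Concretely, the harmonic mean ($\beta=-1$) with weights $(1/2,1/2)$ at $a=(1,2)$ has gradient $(8/9,2/9)$, which sums to $10/9$.
\end{itemize}
This is why the paper's appendix obtains only a domain- and $\beta$-dependent constant $C(\underline a,\bar a,\beta)$ for CES. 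Delete that aside, or restrict it to translation-equivariant aggregators. Since the theorem takes $L_\Phi$ as a hypothesis, the proof itself stands.
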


\begin{proof}
For any $a,a'$ and any $i$,
\[
|(T(a))_i-(T(a'))_i|
=
\left|b_i\!\big(x_i,\Phi_i(a_{-i})\big)-b_i\!\big(x_i,\Phi_i(a'_{-i})\big)\right|
\le L_b\,\left|\Phi_i(a_{-i})-\Phi_i(a'_{-i})\right|
\le L_b\,L_\Phi\,\|a-a'\|.
\]
Taking the norm over $i$ yields $\|T(a)-T(a')\|\le (L_bL_\Phi)\|a-a'\|$.
If $L_bL_\Phi<1$, Banach's Fixed Point Theorem implies $T$ has a unique fixed point.
\end{proof}

\paragraph{Economic meaning.}
$L_\Phi$ measures how sensitive peer exposure is to peers' actions (``how much the peer environment moves'').
$L_b$ measures how strongly an agent reacts to that exposure (``how much the agent moves'').
Uniqueness holds when their product (peer amplification) is below one.

\subsection{How classic models fit in the framework}
\label{subsec:special_cases}

\subsubsection{Linear-in-means (BDF) as a special case}
Take $\Phi_i(a_{-i};G)=\sum_{j\neq i} g_{ij}a_j=(Ga)_i$.
If best responses are linear in exposure,
\begin{equation}\label{eq:lin_best_response}
a_i = \alpha_i + \rho\,(Ga)_i,
\end{equation}
then the equilibrium satisfies $a = \alpha + \rho G a$, which is the standard linear-in-means / spatial-lag structure.
A sufficient uniqueness condition is $|\rho|<1$ (since $G$ is row-stochastic, $\|Ga-Ga'\|_\infty\le \|a-a'\|_\infty$).
This recovers the familiar ``peer effects not too strong'' uniqueness condition used in the linear peer-effects literature.

\subsubsection{CES peer preferences (BRUZ) as a special case}
Take $\Phi_i=\Phi_i^{\text{CES}}(\cdot;\beta)$ from \eqref{eq:ces_norm_actions}.
Then \eqref{eq:br_mapping} implies that equilibrium depends on $\beta$ through the peer-exposure function.
The parameter $\beta$ is economically meaningful: it governs how much weight is placed on high- versus low-action peers.
Equilibrium existence follows from Theorem~\ref{thm:existence_norm_game} under continuity and concavity; uniqueness follows from
Theorem~\ref{thm:unique_contraction_main} under a weak-interaction condition (Appendix~\ref{app:lipschitz_bounds} provides sufficient bounds).

\subsubsection{Smooth-max attention and Brock--Durlauf-style interactions}
To connect to Brock--Durlauf logic, define a smooth-max peer exposure
\begin{equation}\label{eq:smoothmax_main}
\Phi_i^{\text{SM}}(a_{-i};G,\kappa)
\;=\;
\frac{1}{\kappa}\log\!\left(\sum_{j\neq i} g_{ij}\exp(\kappa a_j)\right),\qquad \kappa>0.
\end{equation}
This captures the idea that individuals respond more to salient (high-action) peers, with $\kappa$ indexing how sharp that salience is.
When actions are binary, it is often more convenient to work with choice probabilities $p_i\in[0,1]$
and a logit-type equilibrium mapping of the form
\begin{equation}\label{eq:bd_prob_main}
p_i = \Lambda\!\Big(x_i'\gamma + J\,\Phi_i^{\text{SM}}(p_{-i};G,\kappa)\Big),
\qquad
\Lambda(t)=\frac{1}{1+e^{-t}},
\end{equation}
which is a continuous fixed point on $[0,1]^n$.
Existence follows from Brouwer; uniqueness follows under a weak-interaction condition because $\sup_t\Lambda'(t)\le 1/4$
(Appendix~\ref{app:lipschitz_bounds}).

\section{From norm games to a general nonparametric peer-effects model}
\label{sec:norm_to_np}

\subsection{Peer exposure as a single index}
\label{subsec:nonparametric_bridge}

The norm-game framework delivers a sharp separation between (i) \emph{how peers enter behavior} and (ii) \emph{the causal response of outcomes to that peer exposure}.
The first component is structural and can be disciplined by economic axioms through a peer aggregator \(\Phi\); the second component can be left nonparametric.

Let \(y_i\) denote an equilibrium outcome (or action) and define the endogenous peer exposure index
\begin{equation}\label{eq:exposure_index}
N_i \;=\; \Phi_i(y_{-i};G,\theta),
\end{equation}
where \(G\) is the interaction matrix and \(\theta\) indexes preference parameters (e.g.\ \(\beta\) in power-mean norms).
We then consider the general nonparametric structural form
\begin{equation}\label{eq:np_struct_main_clean}
y_i \;=\; m\!\big(x_i,\,N_i\big) \;+\; \varepsilon_i,
\qquad
N_i=\Phi_i(y_{-i};G,\theta),
\end{equation}
where \(m(\cdot,\cdot)\) is an unknown response function.

Equation \eqref{eq:np_struct_main_clean} nests many familiar specifications: linear-in-means arises when \(m\) is linear and \(\Phi\) is the mean operator; BRUZ arises when \(m\) is linear and
\(\Phi\) is the CES/power mean; attention/extremes-based models arise when \(\Phi\) is smooth-max and \(m\) is induced by a link function.

Manski's reflection problem states that if \(N_i\) is an equilibrium object (a function of peers' outcomes and unobservables),
then \(m(\cdot,\cdot)\) is generally \emph{not} nonparametrically identified without excluded variation or strong support restrictions
\citep{manski1993identification}. The aim of this paper is to show that once \(\Phi\) is disciplined by economic axioms,
it also implies a \emph{model-consistent propagation structure} that can generate excluded variation systematically.
We develop that propagation structure in Sub-Sections \ref{sec:ces_workhorse}--\ref{sec:geometry_instruments},
and return to Manski-style nonparametric identification in Section \ref{sec:manski_to_geometry}.

\subsection{Models with power-mean (CES) peer norms}
\label{sec:ces_workhorse}

\subsubsection*{Power-mean peer exposure and economic interpretation}

Recent peer-effects theory emphasizes that the relevant peer ``norm'' need not be the average of neighbors.
Following \citet{boucher2024toward}, define the CES/power-mean peer norm in group \(s\):
\begin{equation}\label{eq:ces_norm_clean}
\tilde y_{-is}(\beta)
=
\left(\sum_{j\neq i} g_{ij,s}\, y_{js}^{\beta}\right)^{1/\beta},
\end{equation}
where \(\beta\) indexes \emph{peer preference} (salience):
\(\beta=1\) is mean-like; \(\beta\to+\infty\) is max-like; \(\beta\to-\infty\) is min-like.

To connect \eqref{eq:ces_norm_clean} to a structural peer-effects equation, adopt the BRUZ notation:
\begin{equation}\label{eq:struct_peer_boucher_clean}
y_{is}
=
(1-\lambda_2)\, x_{is}'\gamma
+
(\lambda_1+\lambda_2)\,\tilde y_{-is}(\beta)
+
\zeta_s
+
\varepsilon_{is}.
\end{equation}
Here \(\lambda_1\) measures spillover intensity, \(\lambda_2\) conformity, and \(\beta\) determines which peers are salient.

\subsubsection*{Identification in BRUZ }

Because \(\tilde y_{-is}(\beta)\) is an equilibrium object, it is generally endogenous.
BRUZ construct moments using an \emph{exogenous predictor} \(\hat y_{is}=m(x_{is})\) (e.g.\ reduced-form OLS/ML) and exploit isolated individuals (no friends) to separately identify private components.
Their non-isolate moments use the instrument vector
\begin{equation}\label{eq:boucher_instruments_clean}
z_{is}^{\text{BRUZ}}(\theta)
=
\Big[
x_{is},\
\tilde y_{-is}(\hat y,\beta),\
\partial_\beta \tilde y_{-is}(\hat y,\beta)
\Big],
\qquad
\E\!\left[\varepsilon_{is}(\theta)\, z_{is}^{\text{BRUZ}}(\theta)\right]=0.
\end{equation}

The BRUZ moments deliberately compress network information into one-step \emph{scalar} summaries:
\(\tilde y_{-is}(\hat y,\beta)\) and its \(\beta\)-slope. Our contribution is to show that the same structural primitive \(\Phi\) implies a full \emph{who-matters-to-whom} influence structure that generates richer excluded variation.

\subsection{Model-implied influence geometry and geometric instrument families}
\label{sec:geometry_instruments}

\subsubsection*{Who-matters-to-whom: marginal influence}

The central object is the Jacobian (marginal influence field) of the peer exposure map.
For a general aggregator \(N_i=\Phi_i(y;G,\theta)\), define
\begin{equation}\label{eq:Jac_general_clean}
W_{ij}(y;\theta)\;=\;\frac{\partial \Phi_i(y;G,\theta)}{\partial y_j},
\qquad
W(y;\theta)=\big[W_{ij}(y;\theta)\big]_{i,j}.
\end{equation}

\paragraph{CES case.}
For \(\Phi_i=\tilde y_{-i}(\beta)\) in \eqref{eq:ces_norm_clean},
\begin{equation}\label{eq:Jac_CES_clean}
\frac{\partial \tilde y_{-is}(\beta)}{\partial y_{js}}
=
g_{ij,s}\left(\sum_{k\neq i} g_{ik,s}\,y_{ks}^{\beta}\right)^{\frac{1}{\beta}-1} y_{js}^{\beta-1}.
\end{equation}

\paragraph{Anchoring to observables.}
As in BRUZ, evaluate influence objects at an exogenous predictor \(\hat y=m(X)\):
\begin{equation}\label{eq:w_def_clean}
w_{ij,s}(\beta;\hat y)
:=
\left.\frac{\partial \tilde y_{-is}(\beta)}{\partial y_{js}}\right|_{y=\hat y}.
\end{equation}
Row-normalize to obtain the \emph{influence-propagation kernel}
\begin{equation}\label{eq:P_def_clean}
P_{ij,s}(\beta;\hat y)
:=
\frac{w_{ij,s}(\beta;\hat y)}{\sum_{m\neq i} w_{im,s}(\beta;\hat y)}.
\end{equation}

\noindent
\textbf{Econometric interpretation.}
\(P_{ij,s}(\beta;\hat y)\) is the \emph{model-implied influence share} of \(j\) in \(i\)'s peer exposure, computed from observables via \(\hat y\).
When \(\beta\neq 1\), \(P\) generally differs from \(G\) even in group-like graphs.

\subsection*{Geometric instrument family I: multi-step influence exposure}

Let \(X_s\) stack \(x_{is}\). For \(k\ge 2\), define
\begin{equation}\label{eq:PkX_clean}
Z^{(k)}_s(\beta):=P_s^k(\beta;\hat y)\,X_s.
\end{equation}
This is the direct analogue of BDF's \(G^kX\), but with propagation dictated by the structural norm through marginal influence.

\subsection*{Geometric instrument family II: effective influence distance and shells}

To define ``distance'' in a way that is transparent to econometricians, interpret distance as
\emph{how easily influence can travel along strong links}.
Convert influence shares into frictions
\begin{equation}\label{eq:friction_clean}
\ell_{ij,s}(\beta;\hat y):=-\log\!\big(P_{ij,s}(\beta;\hat y)+\varepsilon_0\big),
\qquad \varepsilon_0>0,
\end{equation}
and define the effective influence distance as the shortest-path friction:
\begin{equation}\label{eq:dist_clean}
d_{\beta,s}(i,j)
:=
\min_{p:i\to j}\sum_{(u\to v)\in p}\ell_{uv,s}(\beta;\hat y).
\end{equation}
Define shells \(\mathcal S_{is}(h):=\{j: d_{\beta,s}(i,j)\in(h-1,h]\}\) and shell instruments
\begin{equation}\label{eq:shell_clean}
Z^{\text{shell}}_{is,h}(\beta):=\sum_{j\in\mathcal S_{is}(h)} x_{js},
\qquad h\ge 2.
\end{equation}

\subsection*{Geometric instrument family III: path non-redundancy (torsion weighting)}

A key reason higher-order network IV can be weak is path redundancy (transitivity).
Measure local non-redundancy by comparing direct influence to two-step influence:
\begin{equation}\label{eq:torsion_clean}
\tau_{i,j,k,s}(\beta)
:=
\left|P_{ik,s}(\beta;\hat y)-P_{ij,s}(\beta;\hat y)\,P_{jk,s}(\beta;\hat y)\right|.
\end{equation}
A torsion-weighted two-step instrument is
\begin{equation}\label{eq:torsion_inst_clean}
Z^\tau_{is}(\beta):=
\sum_{j,k} P_{ij,s}(\beta;\hat y)\,P_{jk,s}(\beta;\hat y)\,\tau_{i,j,k,s}(\beta)\,x_{ks}.
\end{equation}

The instrument sets \eqref{eq:PkX_clean}--\eqref{eq:torsion_inst_clean} are all functions of observables \((X,G,\hat y)\) and \(\beta\).
They extend classical \(G^2X\) by (i) reweighting links by model-implied salience, (ii) using effective influence distance rather than hop distance, and (iii) concentrating identifying power on non-redundant paths.

\subsubsection*{Where BDF and BRUZ fit}

Two nesting facts guide interpretation. First, in linear-in-means the marginal influence field is constant,
so the model-implied influence operator coincides with the adjacency operator and the resulting propagation
instruments are exactly the familiar $G^kX$. Second, BRUZ corresponds to using only \emph{one-step} scalar
summaries of the peer aggregator evaluated at an exogenous predictor (the predicted peer norm and its $\beta$-derivative),
while our framework additionally permits instruments built from \emph{multi-step} propagation and path structure.

\begin{theorem}[BDF and BRUZ as special cases of influence-geometry IV]
\label{thm:BDF_BRUZ_fit}
Fix a group (suppress group index). Let $y\in\R^n$, $X\in\R^{n\times p}$, and let $G=(g_{ij})$ be a nonnegative interaction matrix
with $g_{ii}=0$. Let $\Phi_i(y;G,\beta)$ be a peer aggregator and define its Jacobian
\[
W_{ij}(y;\beta):=\frac{\partial \Phi_i(y;G,\beta)}{\partial y_j}.
\]
Let $\hat y=m(X)$ be an exogenous predictor and define evaluated weights $w_{ij}(\beta;\hat y):=W_{ij}(\hat y;\beta)$.
For each non-isolate $i$ (row sum positive), define the normalized influence operator
\[
P_{ij}(\beta;\hat y):=\frac{w_{ij}(\beta;\hat y)}{\sum_{m\neq i} w_{im}(\beta;\hat y)},
\qquad
\text{and set }P_{i\cdot}=0\text{ for isolates}.
\]
Consider the geometric propagation instruments $Z^{(k)}(\beta):=P^k(\beta;\hat y)X$, $k\ge 1$.

\begin{enumerate}
\item[\textnormal{(i)}] \textbf{(BDF as constant-geometry / linear-in-means case).}
If $\Phi_i(y;G)=\sum_{j\neq i} g_{ij}y_j=(Gy)_i$, then for all $y$,
\[
W(y)=G,
\qquad\text{and if $G$ is row-normalized on non-isolates, then } P(\hat y)=G.
\]
Consequently, for every integer $k\ge 1$,
\[
Z^{(k)} = P^k(\hat y)X = G^kX,
\]
so the geometric propagation instruments coincide exactly with the BDF instrument family.

\item[\textnormal{(ii)}] \textbf{(BRUZ as one-step scalar-compression of the same primitives).}
Suppose $\Phi$ is the CES/power-mean norm
\[
\Phi_i^{\mathrm{CES}}(y;G,\beta)=\left(\sum_{j\neq i} g_{ij}y_j^\beta\right)^{1/\beta},\qquad \beta\neq 0.
\]
Define the BRUZ one-step instrument vector
\[
z_i^{\mathrm{BRUZ}}(\theta):=
\Big[x_i,\ \Phi_i^{\mathrm{CES}}(\hat y;G,\beta),\ \partial_\beta \Phi_i^{\mathrm{CES}}(\hat y;G,\beta)\Big],
\]
where $\theta$ collects the structural parameters. Then each component of $z_i^{\mathrm{BRUZ}}(\theta)$ is a (known) function
of observable inputs $(x_i,G,\hat y,\beta)$ and therefore is admissible within the present influence-geometry framework.

Moreover, if an IV/GMM implementation based on the present framework is \emph{restricted} to use instruments drawn only from
linear combinations of the three one-step scalar components in $z_i^{\mathrm{BRUZ}}(\theta)$—i.e., it does not use any
multi-step propagations $P^k(\beta;\hat y)X$ for $k\ge 2$ and does not use any shell/torsion transformations of $P$—then the resulting
moment conditions coincide with those of BRUZ.
\end{enumerate}
\end{theorem}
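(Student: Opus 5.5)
The argument is a direct verification, part by part.

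For part (i), I will differentiate $\Phi_i(y;G)=\sum_{m\neq i} g_{im}y_m$ with respect to $y_j$. This gives $W_{ij}(y)=g_{ij}$ for $j\neq i$. For $j=i$ the derivative is $0=g_{ii}$, because $\Phi_i$ does not depend on $y_i$. Hence $W(y)=G$ identically in $y$, and in particular $w(\hat y)=G$ whatever predictor is used. For a non-isolate, the denominator is $\sum_{m\neq i} w_{im}=\sum_{m\neq i} g_{im}$, which equals $1$ under row normalization, so row $i$ of $P$ is row $i$ of $G$. For an isolate, both $P_{i\cdot}$ and $G_{i\cdot}$ are zero: the first by the convention in the statement, the second because the row sum of a nonnegative row is zero. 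Thus $P(\hat y)=G$. An induction on $k$, using $P^{k}=P\,P^{k-1}$, then gives $P^k(\hat y)X=G^kX$ for all $k\ge1$. No step here is delicate; I only need to state the isolate convention carefully so that the equality holds row by row.

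For part (ii), the first claim is a measurability statement. The component $x_i$ is observable by assumption. For the other two, I will note that $\Phi_i^{\mathrm{CES}}(\hat y;G,\beta)=(\sum_{j\neq i} g_{ij}\hat y_j^\beta)^{1/\beta}$ is an explicit composition of $(G,\hat y,\beta)$, and that it is well defined and smooth in $\beta$ on the domain $\hat y_j>0$ stated in the paragraph on the domain of the CES norm. Differentiating in $\beta$ gives a closed form. Writing $S_i=\sum_{j\neq i} g_{ij}\hat y_j^\beta$,
\[
\partial_\beta \Phi_i^{\mathrm{CES}}=\Phi_i^{\mathrm{CES}}\left(-\frac{\log S_i}{\beta^2}+\frac{1}{\beta S_i}\sum_{j\neq i} g_{ij}\hat y_j^\beta\log\hat y_j\right),
\]
which is again a known function of $(G,\hat y,\beta)$. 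Since $\hat y=m(X)$ is exogenous, all three components are functions of exogenous inputs and $\beta$. They are therefore admissible in the sense used for the instrument families built from $(X,G,\hat y)$ and $\beta$.

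For the coincidence claim, I will make the statement precise as follows. Any instrument set restricted to the linear span of the three components can be written as $z_i=A\,z_i^{\mathrm{BRUZ}}(\theta)$ for some fixed matrix $A$, with $A$ nonsingular in the just-identified or full-rank case. The implied moment conditions are then $\E[\varepsilon_i(\theta)A z_i^{\mathrm{BRUZ}}(\theta)]=A\,\E[\varepsilon_i(\theta)z_i^{\mathrm{BRUZ}}(\theta)]=0$. When $A$ has full column rank, this holds if and only if the BRUZ moments \eqref{eq:boucher_instruments_clean} hold. So the two sets of moments define the same identified set and, with the efficient weighting matrix, the same GMM estimator, because GMM is invariant to nonsingular linear reparametrizations of the instruments.

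The main obstacle is this last step. The phrase ``coincide'' has to be read as ``span the same moment space''; literal equality holds only when $A=I$. I will therefore state the result under the full-rank condition, and I will remark that a rank-deficient $A$ yields a strict subset of the BRUZ moments rather than the same set.
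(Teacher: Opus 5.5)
Your proposal is correct and follows the paper's proof essentially step for step. In part (i) you differentiate the linear aggregator to get $W\equiv G$, use row normalization and the isolate convention to get $P(\hat y)=G$, and iterate to obtain $P^kX=G^kX$; in part (ii) you note that the three BRUZ components are explicit functions of $(x_i,G,\hat y,\beta)$ and that restricting the menu to them reproduces the BRUZ moments, with your closed form for $\partial_\beta\Phi_i^{\mathrm{CES}}$ and your full-column-rank caveat on the linear combinations usefully sharpening what the paper simply asserts holds ``by construction.''
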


\begin{proof}
\textbf{Part (i).}
Under linear-in-means, $\Phi_i(y;G)=\sum_{j\neq i} g_{ij}y_j$. Differentiating with respect to $y_j$ gives
\[
W_{ij}(y)=\frac{\partial \Phi_i(y;G)}{\partial y_j}=g_{ij}\quad\text{for all }i,j,
\]
so $W(y)=G$ for every $y$. Hence $w_{ij}(\hat y)=W_{ij}(\hat y)=g_{ij}$.
If $G$ is row-normalized on non-isolates, then $\sum_{m\neq i} g_{im}=1$ and therefore
\[
P_{ij}(\hat y)=\frac{g_{ij}}{\sum_{m\neq i}g_{im}}=g_{ij},
\]
so $P(\hat y)=G$ (with $P_{i\cdot}=0$ on isolates matching the convention $G_{i\cdot}=0$).
Thus $P^k(\hat y)=G^k$ for each $k\ge 1$, and multiplying by $X$ yields $Z^{(k)}=P^k(\hat y)X=G^kX$.

\medskip
\textbf{Part (ii).}
Fix $\beta\neq 0$. The objects $\hat y=m(X)$ and $G$ are observed/constructed from observables, so
$\Phi_i^{\mathrm{CES}}(\hat y;G,\beta)$ is computable from $(G,\hat y,\beta)$, and
$\partial_\beta \Phi_i^{\mathrm{CES}}(\hat y;G,\beta)$ is likewise computable (analytically or numerically) from the same inputs.
Therefore each component of
\[
z_i^{\mathrm{BRUZ}}(\theta)=\Big[x_i,\ \Phi_i^{\mathrm{CES}}(\hat y;G,\beta),\ \partial_\beta \Phi_i^{\mathrm{CES}}(\hat y;G,\beta)\Big]
\]
is a known function of $(x_i,G,\hat y,\beta)$, and hence is admissible as an instrument in any framework that allows instruments to be
functions of observables and parameters.

For the final claim, note that both BRUZ and the present framework impose moment conditions of the generic form
\[
\E\!\big[\varepsilon_i(\theta)\, z_i(\theta)\big]=0,
\]
where $\varepsilon_i(\theta)$ is the structural residual implied by the model and $z_i(\theta)$ is the chosen instrument vector.
If we \emph{restrict} the instrument choice in the present framework to exactly the three one-step scalar components in
$z_i^{\mathrm{BRUZ}}(\theta)$ (and their linear combinations), then the resulting moment vector is identical to BRUZ’s by construction.
In particular, under this restriction we do not include any additional instruments built from $P^k(\beta;\hat y)X$ with $k\ge 2$ nor any
shell/torsion transformations, so the moment conditions coincide with BRUZ.
\end{proof}
\section{From Manski's reflection problem to geometry-based nonparametric identification}
\label{sec:manski_to_geometry}

This section explains how our geometry-based instrument construction addresses Manski's reflection problem in a
nonparametric way, and how familiar network-IV strategies appear as special cases.
\subsection{The Reflection problem and How to get exogenous variation} 
\subsubsection*{Manski's message in the language of endogenous peer exposure}
\label{subsec:manski_message}

Manski's reflection problem can be summarized as follows: when the peer environment is an equilibrium object,
it is generally endogenous, so without \emph{excluded variation} it is not possible to identify the causal response
to that peer environment nonparametrically \citep{manski1993identification}.

We formalize this by writing outcomes for individual $i$ in group $s$ (e.g.\ a school) as
\begin{equation}\label{eq:np_structural_unify}
y_{is} \;=\; g(x_{is}) \;+\; h\!\big(S_{is}\big) \;+\; \varepsilon_{is},
\qquad \E[\varepsilon_{is}\mid x_{is},G_s]=0,
\end{equation}
where $x_{is}$ are observed covariates, $G_s$ is the interaction matrix, $S_{is}$ is an \emph{endogenous peer exposure index},
$g(\cdot)$ is the (unknown) private component, and $h(\cdot)$ is the (unknown) peer-response function of interest.
Endogeneity arises because $S_{is}$ depends on peers' outcomes/actions and therefore inherits peers' unobservables
(reflection), and may also be correlated with $\varepsilon_{is}$ through sorting or common shocks.

In our framework, the peer exposure index is disciplined by a structural aggregator:
\begin{equation}\label{eq:Phi_general_unify}
S_{is} \equiv \Phi_{is}(y_s;G_s,\beta),
\end{equation}
where $y_s$ stacks outcomes in group $s$ and $\beta$ indexes peer preference / salience (e.g.\ a CES parameter).
The key point is that the same primitive $\Phi$ that defines exposure also implies a \emph{propagation law}
(``who matters to whom'') that can be exploited to generate excluded variation.

\subsubsection*{From an aggregator to influence geometry}
\label{subsec:influence_geometry_selfcontained}

Define the marginal-influence (Jacobian) matrix implied by $\Phi$:
\begin{equation}\label{eq:Jac_general_unify}
W_s(y_s;\beta)\;\equiv\;\left[\frac{\partial \Phi_{is}(y_s;G_s,\beta)}{\partial y_{js}}\right]_{i,j}.
\end{equation}
To anchor this object in observables, we evaluate at an \emph{exogenous predictor} $\hat y_{is}=m(x_{is})$ constructed from $X_s$
(e.g.\ OLS/ML), with $\hat y_s$ measurable w.r.t.\ $(X_s,G_s)$:
\begin{equation}\label{eq:yhat_unify}
\hat y_{is}=m(x_{is}).
\end{equation}
Let $w_{ij,s}(\beta;\hat y):=W_{ij,s}(\hat y_s;\beta)$ and row-normalize to obtain the \emph{normalized influence operator}
\begin{equation}\label{eq:P_def_unify}
P_{ij,s}(\beta;\hat y)
:=
\frac{w_{ij,s}(\beta;\hat y)}{\sum_{m\neq i} w_{im,s}(\beta;\hat y)},
\end{equation}
whenever the denominator is positive (for isolates set the row to zero).
Economically, $P_{ij,s}(\beta;\hat y)$ is the model-implied \emph{influence share}: it measures how strongly $j$ contributes at the margin
to $i$'s peer exposure, under the structural aggregator and evaluated at the observable field $\hat y$.

From $P_s(\beta;\hat y)$ we generate excluded variables in three complementary ways:
\begin{enumerate}
\item \textbf{Multi-step propagation:} $P_s^k(\beta;\hat y)X_s$ for $k\ge2$ (``influencers-of-influencers'').
\item \textbf{Strong-influence distance (shells):} define frictions $\ell_{ij,s}=-\log(P_{ij,s}+\varepsilon_0)$ and
effective distances as shortest-path sums; shell exposures aggregate covariates over distance classes.
\item \textbf{Path non-redundancy (torsion):} compare a direct influence weight to the corresponding two-step product to upweight
wedges where indirect influence is not summarized by one-step links.
\end{enumerate}

\paragraph{Geometry-augmented instrument signature.}
Fix $K\ge 2$ and $H\ge 2$. Collect the geometry-induced instruments into the single signature
\begin{equation}\label{eq:Z_geom_full_unify}
Z_{is}(\beta)
:=
\Big(
Z^{\text{step}}_{is}(\beta),\;
Z^{\mathrm{shell}}_{is,2}(\beta),\dots,Z^{\mathrm{shell}}_{is,H}(\beta),\;
Z^{\tau}_{is}(\beta)
\Big),
\end{equation}
where:
\begin{align}
Z^{\text{step}}_{is}(\beta)
&:=
\Big( (P_s^2(\beta;\hat y)X_s)_{is},\ (P_s^3(\beta;\hat y)X_s)_{is},\ \dots,\ (P_s^K(\beta;\hat y)X_s)_{is}\Big),
\label{eq:Z_step_def}
\\[2pt]
Z^{\mathrm{shell}}_{is,h}(\beta)
&:=
\sum_{j\in \mathcal{S}_{is}(h)} x_{js},
\qquad h=2,3,\dots,H,
\label{eq:Z_shell_def}
\\[2pt]
Z^{\tau}_{is}(\beta)
&:=
\sum_{j,k}
P_{ij,s}(\beta;\hat y)\,P_{jk,s}(\beta;\hat y)\,
\tau_{i,j,k,s}(\beta)\,x_{ks},
\label{eq:Z_torsion_def}
\end{align}
with $P_s(\beta;\hat y)$ the normalized influence operator defined in \eqref{eq:P_def_unify}, $\hat y_s=m(X_s)$ an exogenous predictor,
and $\mathcal{S}_{is}(h)$ the $h$-th strong-influence shell induced by the effective-distance metric
(equations defining frictions and shortest-path distance appear above in this section).
The wedge non-redundancy (torsion) term is
\begin{equation}\label{eq:torsion_def}
\tau_{i,j,k,s}(\beta)
\;:=\;
\left|\,P_{ik,s}(\beta;\hat y) \;-\; P_{ij,s}(\beta;\hat y)\,P_{jk,s}(\beta;\hat y)\right|.
\end{equation}

 $Z_{is}(\beta)$ enlarges the excluded $\sigma$-field beyond one-step summaries by using
\emph{model-consistent} propagation structure: multi-step influence ($P_s^kX_s$), localization along strong influence chains (shells),
and explicit emphasis on non-redundant higher-order paths (torsion).

\subsubsection*{Economic intuition for geometry-based instruments}
\label{subsec:intuition_instruments}

The classic BDF intuition is: the endogenous regressor is the peer term (e.g.\ $Gy$), but peers' outcomes respond to
peers-of-peers covariates, so $G^2X$ can shift the peer term without directly entering $i$'s outcome.

Our instruments apply the same equilibrium-feedback logic under general peer norms. The difference is that what counts as a
``peer'' and how influence propagates is dictated by the aggregator through $P_s(\beta;\hat y)$:
\begin{itemize}
\item \textbf{Multi-step $P^kX$:} replaces ``friends-of-friends'' by ``influencers-of-influencers.''
A change in covariates of $k$ shifts $k$'s behavior, which shifts those for whom $k$ is salient, which shifts the peer exposure of $i$
when those intermediaries are salient to $i$.
\item \textbf{Shell instruments:} localize excluded variation along \emph{strong influence chains} rather than raw hop distance,
aggregating covariates of agents that can reach $i$ through high-propagation paths.
\item \textbf{Torsion weighting:} concentrates on \emph{non-redundant} higher-order paths---exactly where transitivity or near-regularity
would otherwise make $G^2X$ (or other low-order instruments) weak or collinear with one-step exposure.
\end{itemize}

\subsection{Identification as NPIV with an expanding instrument menu}
\label{subsec:npiv_id}

With the structural equation \eqref{eq:np_structural_unify}, identification of the unknown peer-response function $h(\cdot)$ is a
nonparametric IV problem: $S_{is}$ is endogenous, so we require excluded variables that shift $S_{is}$ but are excluded from the outcome equation.
Our construction provides such excluded variables by enlarging the instrument menu to the geometry-induced signature $Z_{is}(\beta)$.

\begin{theorem}[Nonparametric identification of peer response with fixed exposure curvature]
\label{thm:npiv_geometry_fixedbeta}
Assume the structural model
\begin{equation}\label{eq:np_structural_fixedbeta}
y_{is} = g(x_{is}) + h(S_{is}) + \varepsilon_{is},
\end{equation}
where the endogenous peer exposure $S_{is}$ is observed and is constructed using a fixed (user-chosen) curvature
$\beta_{\mathrm{fix}}$ through \eqref{eq:Phi_general_unify}. Let $Z_{is}(\beta_{\mathrm{fix}})$ be the geometry-augmented
instrument signature in \eqref{eq:Z_geom_full_unify}, treated as observed and measurable with respect to exogenous variation
(e.g.\ functions of $(x_{is},G_s)$ and other exogenous shifters).

Suppose:
\begin{enumerate}
\item \textbf{Exclusion (IV exogeneity):} $\E[\varepsilon_{is}\mid x_{is},Z_{is}(\beta_{\mathrm{fix}}),G_s]=0$.
\item \textbf{Non-degeneracy (support):} conditional on $(x_{is},G_s)$, the conditional distribution of $S_{is}$
is not almost surely a point mass.\footnote{This is a mild support condition ruling out trivial cases; completeness below is the key identifying assumption.}
\item \textbf{Completeness:} for any measurable $q$,
\[
\E\!\left[q(S_{is})\mid x_{is},Z_{is}(\beta_{\mathrm{fix}}),G_s\right]=0\ \text{a.s.}\quad\Rightarrow\quad q(S_{is})=0\ \text{a.s.}
\]
\end{enumerate}
Impose the normalization $\E[h(S_{is})\mid x_{is},G_s]=0$ (absorbing location shifts into $g$). Then $h(\cdot)$ is identified
(up to $P$-null sets) from the joint distribution of $(y_{is},S_{is},x_{is},Z_{is}(\beta_{\mathrm{fix}}),G_s)$, and
$g(x)$ is identified as $g(x)=\E[y_{is}\mid x_{is}=x,G_s]$ under the normalization.
\end{theorem}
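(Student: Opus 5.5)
The plan follows the standard NPIV uniqueness argument (Newey--Powell style), applied conditional on $(x_{is},G_s)$ and the geometry signature. Write $\mathcal I_{is}:=(x_{is},Z_{is}(\beta_{\mathrm{fix}}),G_s)$.

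\emph{Step 1: reduce to an integral equation.} Take conditional expectations of \eqref{eq:np_structural_fixedbeta} given $\mathcal I_{is}$ and use exclusion (assumption 1). This gives
\[
\E[y_{is}\mid \mathcal I_{is}] = g(x_{is}) + \E[h(S_{is})\mid \mathcal I_{is}] \quad \text{a.s.}
\]
The left side is a functional of the observed joint distribution. The unknowns are therefore a pair $(g,h)$ solving a linear equation.

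\emph{Step 2: uniqueness.} Suppose $(g_1,h_1)$ and $(g_2,h_2)$ both solve the equation and both satisfy the normalization. Set $\Delta g=g_1-g_2$ and $\Delta h=h_1-h_2$. Then
\[
\E[\Delta h(S_{is})\mid \mathcal I_{is}] = -\Delta g(x_{is}),
\]
which is measurable with respect to $x_{is}$ alone. The difficulty is that completeness as stated only handles a zero right-hand side. So define
\[
q(S,x) := \Delta h(S) + \Delta g(x).
\]
Since $x_{is}$ is in the conditioning set, $\E[q\mid \mathcal I_{is}]=0$.

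\emph{Step 3: apply completeness, the main obstacle.} I would read the completeness condition as holding for measurable $q$ of $S_{is}$ conditional on $(x_{is},G_s)$, i.e.\ with $x$ held fixed. Under that reading, for each fixed $x$ the function $s\mapsto \Delta h(s)+\Delta g(x)$ has zero conditional mean, so completeness yields $\Delta h(S_{is}) = -\Delta g(x_{is})$ a.s.

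Non-degeneracy (assumption 2) says $S_{is}$ takes at least two values in the support for given $(x_{is},G_s)$. Hence $\Delta h$ is constant on that conditional support, equal to $-\Delta g(x)$. Now apply the normalization $\E[\Delta h(S_{is})\mid x_{is},G_s]=0$. This forces $\Delta g(x_{is})=0$, and therefore $\Delta h(S_{is})=0$ a.s.

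This is where the argument is most delicate. One must justify that completeness is used with $x$ fixed, or equivalently treat $q(S,x)$ as a function of $S$ conditional on $x$. One must also handle the constant-shift indeterminacy, which is exactly what the normalization resolves. If completeness is only available for functions of $S$ alone, I would instead invoke it conditionally on each value of $(x,G)$, i.e.\ as a conditional-on-$x$ completeness statement. I would note explicitly that this is how the assumption is to be understood.

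\emph{Step 4: recover $g$.} Take conditional expectations of \eqref{eq:np_structural_fixedbeta} given $(x_{is},G_s)$. The error term drops out because its conditional mean given $\mathcal I_{is}$ is zero, and the law of iterated expectations carries this down to $(x_{is},G_s)$, which is coarser. The $h$ term vanishes by the normalization. This gives $g(x)=\E[y_{is}\mid x_{is}=x,G_s]$. Then $h$ is pinned down (up to null sets) by Step 3 as the unique normalized solution of the integral equation in Step 1.
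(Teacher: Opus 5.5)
\paragraph{Gap in Step 3.} The completeness condition in the statement covers only functions $q(S_{is})$ of the exposure alone. You apply it to $q(S,x)=\Delta h(S)+\Delta g(x)$, which requires completeness for functions of $(S_{is},x_{is})$, or equivalently completeness conditional on each value of $x$. That is a strictly stronger hypothesis: taking $q$ independent of $x$ shows it implies the stated condition, but not conversely. So as written you prove a different theorem, and you acknowledge this when you say the assumption must be ``understood'' that way. The difficulty you flag in Step 2 (completeness only handles a zero right-hand side) is not real. It disappears if you use the normalization \emph{before} completeness rather than after.

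\paragraph{The missing step.} Take conditional expectations of $\E[\Delta h(S_{is})\mid x_{is},Z_{is}(\beta_{\mathrm{fix}}),G_s]=-\Delta g(x_{is})$ given $(x_{is},G_s)$. By the tower property this gives $\Delta g(x_{is})+\E[\Delta h(S_{is})\mid x_{is},G_s]=0$. Both $h_1$ and $h_2$ satisfy the normalization, so the second term vanishes and $\Delta g(x_{is})=0$ a.s.\ without any appeal to completeness. Substituting back gives $\E[\Delta h(S_{is})\mid x_{is},Z_{is}(\beta_{\mathrm{fix}}),G_s]=0$ a.s. The stated completeness now applies verbatim to $q=\Delta h$, so $\Delta h(S_{is})=0$ a.s. This is exactly the order the paper uses.

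\paragraph{Minor points.} Your appeal to the non-degeneracy condition does no work. Once $\Delta h(S)=-\Delta g(x)$ a.s., the normalization alone forces $\Delta g=0$, whether or not $S$ is degenerate given $x$; the paper's argument never uses that condition. Your Step 4 recovering $g(x)=\E[y_{is}\mid x_{is}=x,G_s]$ is fine and matches the paper.
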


\begin{proof}
By \eqref{eq:np_structural_fixedbeta} and exclusion,
\[
\E[y_{is}\mid x_{is},Z_{is}(\beta_{\mathrm{fix}}),G_s]
=
g(x_{is})+\E[h(S_{is})\mid x_{is},Z_{is}(\beta_{\mathrm{fix}}),G_s].
\]
Suppose $(g_0,h_0)$ and $(g_1,h_1)$ generate the same conditional mean. Subtracting yields
\begin{equation}\label{eq:diff_fixedbeta}
(g_0-g_1)(x_{is})
+
\E\!\left[(h_0-h_1)(S_{is})\mid x_{is},Z_{is}(\beta_{\mathrm{fix}}),G_s\right]=0
\quad\text{a.s.}
\end{equation}
Take conditional expectations of \eqref{eq:diff_fixedbeta} with respect to $Z_{is}(\beta_{\mathrm{fix}})$ given $(x_{is},G_s)$:
\[
(g_0-g_1)(x_{is}) + \E\!\left[(h_0-h_1)(S_{is})\mid x_{is},G_s\right]=0 \quad\text{a.s.}
\]
Under the normalization $\E[h(S_{is})\mid x_{is},G_s]=0$ applied to both $h_0$ and $h_1$, we have
$\E[(h_0-h_1)(S_{is})\mid x_{is},G_s]=0$, hence $(g_0-g_1)(x_{is})=0$ a.s. Returning to \eqref{eq:diff_fixedbeta} gives
\[
\E\!\left[(h_0-h_1)(S_{is})\mid x_{is},Z_{is}(\beta_{\mathrm{fix}}),G_s\right]=0 \quad\text{a.s.}
\]
By completeness, $(h_0-h_1)(S_{is})=0$ a.s., implying $h_0=h_1$ a.e. This proves identification of $h$.
With $h$ identified and $\E[h(S_{is})\mid x_{is},G_s]=0$, $g(x)$ follows from $g(x)=\E[y_{is}\mid x_{is}=x,G_s]$.
\end{proof}

\begin{remark}[On $\beta$]
Theorem~\ref{thm:npiv_geometry_fixedbeta} identifies $h(\cdot)$ \emph{conditional on a fixed exposure definition}
$\beta_{\mathrm{fix}}$. Identifying $\beta$ itself generally requires additional shape conditions and/or designs that make
$S(\beta)$ vary in direction rather than only in scale across $\beta$. In many clustered or high-transitivity networks,
$S(\beta)$ can be nearly collinear across $\beta$ (a flat profile objective), leading to weak identification of $\beta$ even when
$\lambda$ (or $h$) is well identified for any chosen $\beta_{\mathrm{fix}}$.
\end{remark}

\paragraph{Interpretation.}
Manski's negative message is fundamentally about \emph{insufficient independent variation}: when the only available shifters are low-rank
(one-step) summaries such as group means or $G_sX_s$, the endogenous peer regressor moves almost one-for-one with the outcome, so there is
too little excluded variation to separate endogenous social effects from correlated effects and contextual forces (the ``reflection'' problem).
In our setting, the identifying assumption is still a standard NPIV completeness/injectivity condition, but the key distinction is that
\emph{geometry changes its plausibility}. Geometry enlarges the excluded $\sigma$-field from one-step summaries to a rich, within-network
collection of \emph{structured multi-step and path-based} objects generated by the model-implied influence operator $P_s(\beta;\hat y)$
(e.g., $P_s^kX_s$, shell and torsion components). Because these instruments vary across nodes even within the same group and exploit
non-redundant propagation paths, they can generate high-rank excluded variation in the peer exposure precisely in network topologies where
one-step moments collapse, making completeness (and hence point identification) more plausible in empirically relevant networks.

\subsubsection*{How classic strategies fit as special cases}
\label{subsec:special_cases_np}

The same objects also unify well-known identification strategies:

\begin{itemize}
\item \textbf{BDF (linear-in-means).} If $\Phi_{is}(y_s)=(G_sy_s)_i$, then $W_s\equiv G_s$ and $P_s\equiv G_s$ (up to normalization),
so multi-step instruments reduce to $(P_s^2X_s,\dots,P_s^KX_s)=(G_s^2X_s,\dots,G_s^KX_s)$, and the familiar rank failures correspond to
the case where powers of $G_s$ do not generate new directions \citep{bramoulle2009identification}.

\item \textbf{BRUZ (peer preferences).} BRUZ use one-step scalar summaries $\Phi_{is}(\hat y_s;\beta)$ and
$\partial_\beta \Phi_{is}(\hat y_s;\beta)$ as moments \citep{boucher2024toward}. In our language, these are one-step compressions computed
from $(G_s,\hat y_s,\beta)$. The same primitive $\Phi$ also implies the full field $W_s(\hat y_s;\beta)$ and thus $P_s(\beta;\hat y)$,
which can generate additional excluded variation via multi-step propagation and non-redundant path structure.
\end{itemize}

\subsection{Illustrative example: where one-step moments collapse}
This example shows a setting where the one-step BRUZ instruments based on $\tilde y_{-i}(\hat y,\beta)$ and $\partial_\beta \tilde y_{-i}(\hat y,\beta)$ provide essentially no excluded variation for a subset of nodes, while two-step/geodesic instruments remain relevant.

\paragraph{Environment: two disconnected stars (one school).}
There are two hubs $h\in\{a,b\}$ and disjoint peripheral sets $\mc P_a$ and $\mc P_b$.
Each peripheral $i\in\mc P_h$ has exactly one friend (hub $h$), hubs are not linked.
Row-normalized weights satisfy
\[
g_{ih}=1\ (i\in\mc P_h),\qquad g_{hi}=1/|\mc P_h|\ (i\in\mc P_h).
\]

\paragraph{Peer-preference model (set conformity to zero for transparency).}
\begin{equation}\label{eq:ex_beta_star_struct}
y_i
=
\mu + x_i'\gamma + \lambda_1\,\tilde y_{-i}(\beta) + \varepsilon_i,
\qquad
\tilde y_{-i}(\beta)=\Big(\sum_{j\neq i} g_{ij}y_j^\beta\Big)^{1/\beta},
\quad \beta\neq 1.
\end{equation}
For peripherals $i\in\mc P_h$, $\tilde y_{-i}(\beta)=y_h$ (single neighbor). For hubs,
\[
\tilde y_{-h}(\beta)
=
\left(\frac{1}{|\mc P_h|}\sum_{j\in\mc P_h} y_j^\beta\right)^{1/\beta}.
\]

\paragraph{One-step BRUZ objects collapse for peripherals.}
Let $\hat y_i=m(x_i)$. For a peripheral $i\in\mc P_h$,
\[
\tilde y_{-i}(\hat y,\beta)
=
\left(\hat y_h^\beta\right)^{1/\beta}
=
\hat y_h
\quad\Rightarrow\quad
\partial_\beta \tilde y_{-i}(\hat y,\beta)=0.
\]
If hubs have identical covariates $x_a=x_b$, then $\hat y_a=\hat y_b$ and the one-step predicted peer norm is constant across peripherals, with a zero $\beta$-derivative. Hence one-step BRUZ moments provide no excluded first-stage variation for $\tilde y_{-i}(\beta)=y_h$ for peripherals.

\paragraph{Two-step / geodesic instruments remain relevant.}
Although peripherals only ``see'' the hub directly, the hub outcome depends on the entire peripheral set through $\tilde y_{-h}(\beta)$.
Therefore, covariates of distance-2 nodes shift $y_h$ and hence $\tilde y_{-i}(\beta)$. The distance-2 shell instrument is
\[
Z^{\mathrm{geo}}_{i,2}
=
\sum_{j:\ d(i,j)=2} x_j
=
\sum_{j\in\mc P_h\setminus\{i\}} x_j,
\]
which is excluded from \eqref{eq:ex_beta_star_struct} for peripheral $i$ (not direct friends), but relevant via the hub equilibrium.
Likewise, multi-step influence instruments $(P^2(\beta;\hat y)X)_i$ aggregate the hub's neighborhood covariates with $\beta$-dependent weights.

\begin{remark}
This example isolates the ``information left on the table'' by one-step scalar moments:
even when $\tilde y_{-i}(\hat y,\beta)$ is constant, equilibrium influence can still propagate through peers-of-peers.
Geometry-based instruments extract that propagation information in a structured way.\end{remark}
\section{Monte Carlo evidence on identification gains}
\label{sec:mc}

This section provides simulation evidence on when and why geometry-based instruments strengthen identification
of the \emph{peer-effect intensity} relative to standard one-step network IV and scalar-moment instruments.
The Monte Carlo is designed to isolate the mechanism emphasized in the theory:
\emph{multi-step propagation through non-redundant paths can generate excluded variation even when one-step summaries collapse.}

\subsubsection*{Design goals}
Each experiment is built to answer three practical questions:
\begin{enumerate}
  \item \textbf{Weak-ID diagnosis:} in which network topologies do baseline one-step instruments become weak (or nearly collinear)?
  \item \textbf{First-stage gains:} do geometry instruments deliver stronger predictive power for the endogenous peer exposure?
  \item \textbf{Finite-sample accuracy:} how do bias and RMSE change when the instrument menu is expanded?
\end{enumerate}

\subsubsection*{Data-generating process (DGP)}
\label{subsec:mc_dgp}
We simulate $S$ groups (``schools'') indexed by $s$, each with $n_s$ individuals.
Within each group we generate an interaction matrix $G_s=(g_{ij,s})$ (row-normalized for non-isolates) and covariates
$X_s=(x_{1s},\dots,x_{n_ss})'$.

\paragraph{Structural equation with a fixed CES peer exposure.}
The Monte Carlo targets identification of the \emph{peer-effect parameter} $\lambda_0$ holding the curvature parameter $\beta$ fixed.
This reflects the empirical use case where the researcher specifies an exposure mapping and seeks robust identification of the peer effect.
Specifically, we define the endogenous peer exposure
\[
w_{is}(\beta_{\mathrm{fix}})\;=\;\Phi_{is}^{\mathrm{CES}}(y_s;G_s,\beta_{\mathrm{fix}}),
\qquad
\Phi_{is}^{\mathrm{CES}}(y_s;G_s,\beta)
=
\Big(\sum_{j\neq i} g_{ij,s} y_{js}^{\beta}\Big)^{1/\beta},
\]
and simulate outcomes from
\begin{equation}\label{eq:mc_struct_fixedbeta}
y_{is}
=
x_{is}'\gamma_0
+
\lambda_0 \, w_{is}(\beta_{\mathrm{fix}})
+
\zeta_s
+
\varepsilon_{is}.
\end{equation}
We ensure positivity of the CES mapping by working with shifted outcomes $y_{is,+}=y_{is}+c$ (with $c>0$) inside $\Phi^{\mathrm{CES}}$.

\paragraph{Shocks.}
Baseline shocks are i.i.d.\ $\varepsilon_{is}\sim \mathcal{N}(0,\sigma_\varepsilon^2)$.
(We also consider correlated-shock variants $\varepsilon_{is}=u_s+\nu_{is}$ in additional experiments; all inference is clustered by group.)

\paragraph{Equilibrium computation.}
In each replication we generate outcomes by solving the network norm game to equilibrium, since the DGP is a simultaneous system; solving the fixed point ensures the simulated data satisfy the maintained structural peer-effects equation and its implied propagation operator.

Given $(X_s,G_s)$ and $(\lambda_0,\beta_{\mathrm{fix}})$, we solve \eqref{eq:mc_struct_fixedbeta} by fixed-point iteration:
\[
y_s^{(t+1)} = X_s\gamma_0 + \lambda_0 \Phi^{\mathrm{CES}}(y_{s,+}^{(t)};G_s,\beta_{\mathrm{fix}}) + \zeta_s \mathbf{1} + \varepsilon_s,
\]
initialized at $y_s^{(0)}=X_s\gamma_0$.
We monitor convergence by $\|y_s^{(t+1)}-y_s^{(t)}\|/\|y_s^{(t)}\|$ below a tolerance and cap iterations; non-convergent draws are recorded.

\subsubsection*{Network design}
\label{subsec:mc_networks}
To highlight identification gains, we use a \emph{dispersion-bridge} design within each group.
Nodes are partitioned into two blocks with different outcome dispersion, and a small number of bridge links connect the blocks.
This topology creates substantial multi-step propagation while making one-step scalar summaries comparatively fragile.

For transparency, we report simple diagnostics that characterize the exposure and the influence field in each design:
(i) dispersion of the endogenous exposure $w(\beta_{\mathrm{fix}})$ (e.g.\ $\mathrm{sd}(w)/\mathrm{mean}(w)$),
(ii) concentration of Jacobian weight shares (mean and upper-tail of $\max_j P_{ij}$), and
(iii) dispersion of the Jacobian row-sum intensity $s_i=\sum_j g_{ij}\hat y_{js}^{\beta_{\mathrm{fix}}-1}$.

\subsection*{Instrument menus and estimators}
\label{subsec:mc_instruments}
We compare two instrument menus (holding the DGP fixed):

\paragraph{(A) One-step scalar instruments (BRUZ-style).}
Let $\hat y_{is}$ be an exogenous predictor of $y_{is}$ (constructed below). The BRUZ menu uses
\[
Z^{\mathrm{BRUZ}}_{is}(\beta_{\mathrm{fix}})
=
\Big[x_{is},\ \Phi_{is}^{\mathrm{CES}}(\hat y_s;G_s,\beta_{\mathrm{fix}}),\ \partial_\beta \Phi_{is}^{\mathrm{CES}}(\hat y_s;G_s,\beta)\big|_{\beta=\beta_{\mathrm{fix}}}\Big],
\]
where $\partial_\beta \Phi$ is computed numerically by finite differences.\footnote{Including $\partial_\beta\Phi$ improves robustness of the BRUZ first stage even when $\beta$ is treated as fixed, and we keep this specification for comparability with the profile-based implementations.}

\paragraph{(B) Geometry-augmented instruments (this paper).}
Let $P_s(\beta_{\mathrm{fix}};\hat y)$ denote the row-normalized Jacobian-weight operator evaluated at $\hat y$.
The geometry menu augments BRUZ with multi-step and shell-based excluded variation:
\[
Z^{\mathrm{GEO}}_{is}(\beta_{\mathrm{fix}})
=
\Big[ Z^{\mathrm{BRUZ}}_{is}(\beta_{\mathrm{fix}}),\ (P_s^2X_s)_{is},\ \partial_\beta(P_s^2X_s)\big|_{\beta=\beta_{\mathrm{fix}}},\ (\mathrm{Shell}_2(G_s)X_s)_{is} \Big],
\]
where $\mathrm{Shell}_2(G_s)$ is the exact distance-2 adjacency (row-normalized). This ``poster-boy'' menu mirrors the implementation used in the dominance simulations.

\paragraph{Estimation of $(\gamma,\lambda)$ with fixed $\beta$.}
For each replication we treat $\beta_{\mathrm{fix}}$ as given and estimate $(\gamma,\lambda)$ by 2SLS/GMM using the chosen menu.
We report first-stage diagnostics for the endogenous exposure $w(\beta_{\mathrm{fix}})$ (partial $R^2$ and F-statistics),
and second-stage performance for $\hat\lambda$.

\subsubsection*{Constructing the exogenous predictor}
\label{subsec:mc_yhat}
We consider three constructions to separate ``oracle'' from ``realistic'' performance:

\begin{enumerate}
  \item \textbf{Oracle predictor:} $\hat y_{is}=x_{is}'\gamma_0$.
  \item \textbf{OLS predictor:} estimate $\hat\pi$ from a regression of $y$ on $X$ (optionally with group fixed effects) and set $\hat y_{is}=x_{is}'\hat\pi$.
  \item \textbf{Cross-fitted predictor (recommended):} split the sample into folds, estimate $\hat\pi^{(-f)}$ on training folds,
  and set $\hat y_{is}=x_{is}'\hat\pi^{(-f)}$ on the held-out fold.
\end{enumerate}

\subsubsection*{Performance metrics}
\label{subsec:mc_metrics}
Across $1000$ replications we report:
\begin{itemize}
  \item \textbf{Bias / RMSE:} for $\hat\lambda$.
  \item \textbf{Weak-ID diagnostics:} first-stage partial $R^2$ and F-statistics for $w(\beta_{\mathrm{fix}})$ under each instrument menu.
\end{itemize}

\subsubsection*{Algorithm}
\label{subsec:mc_algorithm}
For each replication $r=1,\dots,R$:
\begin{enumerate}
  \item Draw networks $\{G_s\}_{s=1}^S$ and covariates $\{X_s\}_{s=1}^S$.
  \item Draw shocks $\{\varepsilon_s\}_{s=1}^S$ and solve \eqref{eq:mc_struct_fixedbeta} for $y$ by fixed-point iteration.
  \item Construct $\hat y=m(X)$ (oracle / OLS / cross-fitted).
  \item Compute instruments under BRUZ and GEO at $\beta_{\mathrm{fix}}$ and estimate $(\gamma,\lambda)$ by IV/GMM.

  \item Record $(\hat\lambda,\hat\gamma)$ and diagnostics (first-stage partial $R^2$, F-statistics.
\end{enumerate}
  
\begin{table}[!htbp]\centering
\caption{Monte Carlo: $\lambda$ accuracy across sample size and exposure curvature.}
\label{tab:mc_fixedbeta_lambda}
\begin{tabular}{rrrrrr}
\toprule
$n$ & $\beta$ & Bias($\lambda$) BRUZ & RMSE($\lambda$) BRUZ & Bias($\lambda$) GEO & RMSE($\lambda$) GEO\\
\midrule
600  & 0.80 & 0.013   & 6.350   & 0.037 & 0.096\\
600  & 1.20 & -11.827 & 359.638 & 0.034 & 0.090\\
600  & 1.60 & -1.624  & 75.315  & 0.038 & 0.098\\
600  & 2.00 & -2.933  & 82.131  & 0.033 & 0.096\\
2400 & 0.80 & 6.926   & 218.339 & 0.048 & 0.132\\
2400 & 1.20 & 1.835   & 40.568  & 0.058 & 0.132\\
2400 & 1.60 & 0.970   & 25.465  & 0.046 & 0.127\\
2400 & 2.00 & -9.537  & 300.943 & 0.059 & 0.136\\
\bottomrule
\end{tabular}

\vspace{0.5em}
\footnotesize\emph{Notes:} The DGP is the fixed-$\beta$ CES peer model
$y_{is}=x_{is}'\gamma_0+\lambda_0 \Phi^{\mathrm{CES}}_{is}(y_s;G_s,\beta)+\zeta_s+\varepsilon_{is}$,
where $\beta$ is treated as part of the exposure definition and held fixed at the reported value.
Each cell reports Monte Carlo bias and RMSE of $\hat\lambda$ over $R$ replications. ``BRUZ'' uses one-step scalar instruments based on $\Phi^{\mathrm{CES}}(\hat y;G,\beta)$ (and its numerical derivative in $\beta$), while ``GEO'' augments this menu with multi-step geometry instruments (e.g.\ $P^2X$ and shell instruments) constructed from Jacobian weights evaluated at $\hat y$.
\end{table}

\begin{table}[!htbp]\centering
\caption{Monte Carlo: first-stage strength and over-identification diagnostics.}
\label{tab:mc_fixedbeta_fs}
\begin{tabular}{rrrrrrrr}
\toprule
$n$ & $\beta$ & $R^2$ BRUZ & $F$ BRUZ & & $R^2$ GEO & $F$ GEO & \\
\midrule
600  & 0.80 & 0.007 & 4.3  & & 0.581 & 122.8 & \\
600  & 1.20 & 0.002 & 1.0  & & 0.578 & 121.4 & \\
600  & 1.60 & 0.005 & 2.9  & & 0.581 & 123.0 & \\
600  & 2.00 & 0.014 & 8.8  & & 0.576 & 120.6 & \\
2400 & 0.80 & 0.004 & 9.3  & & 0.412 & 252.0 & \\
2400 & 1.20 & 0.004 & 9.4  & & 0.414 & 254.8 & \\
2400 & 1.60 & 0.006 & 14.6 & & 0.407 & 246.7 & \\
2400 & 2.00 & 0.010 & 25.5 & & 0.416 & 256.8 & \\
\bottomrule
\end{tabular}

\vspace{0.5em}
\footnotesize\emph{Notes:} ``First-stage $R^2$'' is the partial $R^2$ from regressing the endogenous peer exposure
$w(\beta)=\Phi^{\mathrm{CES}}(y;G,\beta)$ on $X$ and the excluded instruments; $F$ is the corresponding first-stage F-statistic.
\end{table}
\paragraph{Monte Carlo summary.}
Tables~\ref{tab:mc_fixedbeta_lambda}--\ref{tab:mc_fixedbeta_fs} show that geometry-based instruments deliver large and robust identification gains for the peer-effect parameter $\lambda$ across both sample sizes and exposure curvatures.
In the ``dispersion-bridge'' design, the one-step BRUZ menu is systematically weak: its first-stage partial $R^2$ is near zero (roughly $0.002$--$0.014$ at $n=600$ and $0.004$--$0.010$ at $n=2400$), with corresponding first-stage $F$ statistics in the weak-instrument range.
This weak identification translates into severe finite-sample instability for $\hat\lambda$, with RMSEs that can be extremely large and highly sensitive to $\beta$ (e.g., RMSEs ranging from about $6$ to $360$ at $n=600$ and from about $25$ to $301$ at $n=2400$).

By contrast, the geometry-augmented menu produces a strong first stage in every cell, with partial $R^2$ around $0.58$ at $n=600$ and around $0.41$ at $n=2400$, and $F$ statistics well above conventional thresholds (about $120$ at $n=600$ and about $250$ at $n=2400$).
Consistent with this, GEO yields accurate and stable estimation of $\lambda$ across all reported $\beta$ values: bias remains small (about $0.03$--$0.06$) and RMSE remains low (about $0.09$--$0.14$), with only modest variation across $\beta$ and $n$.
Overall, the results confirm the mechanism emphasized in the theory: multi-step, geometry-based excluded variation can restore identification of peer effects in network topologies where one-step scalar moments effectively collapse.

\section{Empirical application: peer effects and peer preferences in NetHealth}
\label{sec:nethealth}

This section illustrates how the geometry-based instrument construction can be implemented in a
longitudinal social network setting and how conclusions about peer effects depend on the
\emph{peer-exposure aggregator} that disciplines what counts as a salient peer.
The goal is not to claim a single ``true'' social norm, but to show how the same dataset can support
very different peer-effect estimates depending on whether exposure is (i) mean-like (CES with small curvature),
(ii) attention-to-extremes (smooth-max), or (iii) rank-based (quantile/median norms).

\subsection{Data, waves, and core sample}
We use NetHealth, which combines (i) repeated network measurement and (ii) objective outcomes from wearable
devices and administrative records.
We focus on a core panel of participants with valid identifiers across the Fitbit, network, and baseline survey files.
The analysis proceeds in ``waves'' indexed by $t=1,\dots,8$.
For each wave we construct a short outcome window around the median survey date (a fixed-length window in days)
and compute wave-specific outcomes and networks; then we stack the resulting wave samples to estimate a single
(peer-effect, preference) parameter vector with cluster-robust inference clustered by individual (egoid).

\paragraph{Outcomes.}
We consider three outcomes:
(i) physical activity (Fitbit \texttt{steps}), (ii) sleep duration (Fitbit \texttt{minsasleep}), and
(iii) academic performance (term GPA from course records).
Fitbit activity reports daily steps (\texttt{steps}) and a compliance measure
(\texttt{complypercent}, percent minutes wearing/using the device).
Fitbit sleep reports minutes asleep (\texttt{minsasleep}).
Wave-level outcomes are aggregated over the window (subject to a minimum number of valid days), and we include
mean Fitbit compliance over the same window as a control (for GPA, compliance is set to 100 by construction).

\paragraph{Controls.}
The baseline control vector $X$ includes an intercept, a male indicator (from the BasicSurvey variable
\texttt{gender\_1}), and mean Fitbit compliance in the analysis window (average \texttt{complypercent});
for GPA, compliance is set to 100 for all observations.

\subsubsection*{Network construction}
For each wave $t$, we construct an interaction matrix $G_t$ from the observed friendship links.
We treat ties as undirected in the baseline construction and row-normalize $G_t$ for non-isolates.
Because both one-step and geometry-based instruments rely on propagation through the network, isolates contain
no peer-exposure information; in the baseline empirical pipeline we drop isolates within wave before stacking.
(We report sensitivity to alternative conventions such as retaining isolates with zero exposure, using reciprocal
ties only, or alternative weighting schemes.)

\subsection{Empirical model and peer exposure}
Let $y_{it}$ denote the outcome for individual $i$ in wave $t$ and $x_{it}$ the controls.
The empirical specification is a linear peer-effects equation with an endogenous peer exposure index
$S_{it}(\theta)$ disciplined by a structural aggregator:
\begin{equation}\label{eq:nethealth_struct}
y_{it} \;=\; x_{it}'\gamma \;+\; \lambda\, S_{it}(\theta) \;+\; \varepsilon_{it},
\qquad \E[\varepsilon_{it}\mid x_{it},G_t,Z_{it}]=0,
\end{equation}
where $\lambda$ is the peer-effect coefficient of interest and $\theta$ is the peer-preference / salience
parameter governing the exposure aggregator.

\paragraph{Aggregator menu (peer preferences).}
We compare three families of exposure indices $S_{it}(\theta)$:
\begin{enumerate}
\item \textbf{CES norm (curvature $\beta$):}
\[
S_{it}^{\mathrm{CES}}(\beta)
=
\Big(\sum_{j\neq i} g_{ij,t}\, y_{jt}^{\beta}\Big)^{1/\beta},
\]
which interpolates between mean-like and extreme-sensitive norms as $\beta$ varies.

\item \textbf{Smooth-max / attention (salience $\kappa$):}
\[
S_{it}^{\mathrm{SM}}(\kappa)
=
\kappa^{-1}\log\sum_{j\neq i} g_{ij,t}\exp(\kappa\, y_{jt}),
\]
implemented in \emph{scaled} form so the exposure remains in the units of $y$ (so $\lambda$ is interpretable
on the original outcome scale).

\item \textbf{Quantile / rank norm (quantile $q$):}
\[
S_{it}^{Q}(q)
=
\arg\min_{s}\sum_{j\neq i} g_{ij,t}\,\rho_q(y_{jt}-s),
\]
which corresponds to the weighted peer $q$-quantile (median when $q=0.5$; upper-quartile ``aspirational norm''
when $q=0.75$).
\end{enumerate}

\subsubsection*{Instruments: BRUZ vs GEO (geometry-augmented)}
Endogeneity arises because $S_{it}(\theta)$ is constructed from peers' outcomes and therefore inherits peers'
unobservables (reflection) and potentially sorting/common shocks.
We implement two instrument menus:

\paragraph{BRUZ (one-step scalar moments).}
Let $\hat y_{it}=m(x_{it})$ denote an exogenous predictor of $y_{it}$ constructed from observables.
BRUZ uses excluded variation from one-step aggregator summaries evaluated at $\hat y$ and numerical derivatives
w.r.t.\ the preference parameter:
\[
Z^{\mathrm{BRUZ}}_{it}(\theta)
=
\big[x_{it},\ \Phi(\hat y_t;G_t,\theta),\ \partial_\theta \Phi(\hat y_t;G_t,\theta)\big].
\]
This is the natural analogue of BDF/BRUZ instruments for general peer norms. 

\paragraph{GEO (multi-step geometry-induced instruments).}
Geometry augments BRUZ by adding excluded variation generated from the model-implied influence operator
$P_t(\theta;\hat y)$ (the normalized Jacobian weights of the exposure mapping evaluated at $\hat y$),
and then propagating covariates along multi-step influence paths (and optionally shells/torsion):
\[
Z^{\mathrm{GEO}}_{it}(\theta)
=
\big[Z^{\mathrm{BRUZ}}_{it}(\theta),\ (P_t^2(\theta;\hat y)X_t)_{it},\ \ldots \big].
\]
Intuitively, $P^kX$ replaces ``friends-of-friends'' with ``influencers-of-influencers'' under the
aggregator-consistent influence geometry, expanding the excluded $\sigma$-field beyond one-step summaries. 

\paragraph{Exogenous predictor and cross-fitting.}
Following best practice for generated instruments, we construct $\hat y=m(X)$ using a pooled cross-fitted
linear predictor (K-fold sample splitting), so that an observation's own realized outcome does not leak into
its own instruments through estimation of $m(\cdot)$.

\subsubsection*{Estimation and inference}
For each aggregator family, we estimate $(\gamma,\lambda,\theta)$ by a profile-IV procedure over a grid of the
preference parameter (e.g.\ $\beta$ for CES, $\kappa$ for smooth-max, and fixed $q$ for quantiles), and we report
cluster-robust standard errors clustered by egoid. In practice, BRUZ and GEO can select different profile minimizers,
so we report $\hat\theta^{\mathrm{BRUZ}}$ and $\hat\theta^{\mathrm{GEO}}$ separately along with $\hat\lambda$ and its
standard error. 

\subsection{Results: peer effects depend on the aggregator}
Table~\ref{tab:nethealth_allagg} reports stacked IV estimates for each outcome (steps, sleep, GPA) under each
aggregator (CES, smooth-max, quantile norms), comparing BRUZ and GEO.

\paragraph{Steps.}
Under CES and smooth-max exposure, both menus deliver negative peer-effect estimates, with GEO generally producing
smaller standard errors; the preferred preference parameter differs across menus (e.g.\ CES curvature differs between
BRUZ and GEO). Quantile norms yield noisier and sign-sensitive estimates, consistent with rank-based exposure being a
different behavioral object than mean-like or attention-to-extremes exposure. 

\paragraph{Sleep.}
CES exposure yields near-zero peer effects. Under attention and upper-quantile norms, GEO can select substantially
different preference parameters than BRUZ and can deliver positive point estimates, though precision varies by
aggregator and outcome. This pattern is consistent with the idea that sleep-related peer influence may operate through
salient or aspirational peers rather than average peers, but it also highlights that identification of preferences can be
fragile in sparse networks and motivates the geometry diagnostics discussed in the implementation notes. 

\paragraph{GPA.}
GPA effects are small under CES, while smooth-max and quantile norms can deliver different point estimates and different
preferred preference parameters under BRUZ vs GEO. Given the thinner usable sample for GPA (relative to Fitbit outcomes),
we treat GPA primarily as a mechanism and robustness outcome rather than a headline estimate. 

\paragraph{Takeaway for identification.}
The central empirical lesson is that ``peer effects'' are not a single estimand independent of modeling choices:
the mapping from peers' outcomes to exposure (mean-like vs attention vs rank norms) is itself a behavioral primitive,
and the geometry-based instrument menu provides a disciplined way to generate excluded variation that is consistent with
that primitive. The Monte Carlo evidence suggests that GEO can materially strengthen first-stage diagnostics in settings where
one-step moments are weak, and this application provides a practical template for applying the same logic to real network data.

\begin{table}[!htbp]\centering
\caption{NetHealth stacked IV results across outcomes and aggregators (BRUZ vs GEO).}\label{tab:nethealth_allagg}
\begin{threeparttable}
\begin{tabular}{llccrrrr}
\toprule
Outcome & Aggregator & Param(BRUZ) & Param(GEO) & $\hat\lambda$ (BRUZ) & se & $\hat\lambda$ (GEO) & se \\
\midrule
Steps& CES & 0.804 & 0.177 & -0.0937 & 0.0282 & -0.0636 & 0.0200 \\
Steps& SmoothMax & 0.050 & 0.050 & -0.0144 & 0.0036 & -0.0130 & 0.0035 \\
Steps& Quantile(q=0.50) & 0.50 & 0.50 & -0.0571 & 0.1017 & 0.0743 & 0.0808 \\
Steps& Quantile(q=0.75) & 0.75 & 0.75 & -0.0124 & 0.1111 & 0.1203 & 0.0754 \\
Sleep& CES & 0.050 & 0.050 & -0.0065 & 0.0123 & -0.0066 & 0.0122 \\
Sleep& SmoothMax & 10.000 & 5.150 & -0.0246 & 0.2071 & 0.1305 & 0.1197 \\
Sleep& Quantile(q=0.50) & 0.50 & 0.50 & -0.5520 & 0.3818 & -0.1309 & 0.1614 \\
Sleep& Quantile(q=0.75) & 0.75 & 0.75 & 0.0084 & 0.4649 & 0.3396 & 0.1646 \\
GPA & CES & 2.400 & 2.400 & 0.0446 & 0.0389 & 0.0464 & 0.0392 \\
GPA & SmoothMax & 0.050 & 0.800 & 0.0084 & 0.0065 & 0.1257 & 0.0851 \\
GPA & Quantile(q=0.50) & 0.50 & 0.50 & -0.0137 & 0.3647 & 0.1427 & 0.1530 \\
GPA & Quantile(q=0.75) & 0.75 & 0.75 & -0.0267 & 0.3144 & 0.0260 & 0.1924 \\
\bottomrule
\end{tabular}
\begin{tablenotes}\footnotesize
\item Notes: Each row pools (stacks) NetHealth waves for the stated outcome and reports profile-IV estimates with cluster-robust standard errors clustered by egoid. The control vector $X$ includes an intercept, a male indicator (from the BasicSurvey variable \texttt{gender\_1}), and mean Fitbit compliance in the analysis window (average \texttt{complypercent}); for GPA, compliance is set to 100 for all observations. For CES, the profile parameter is the curvature $\beta$. For SmoothMax, the profile parameter is the attention parameter $\kappa$ in $\kappa^{-1}\log\sum_j g_{ij}\exp(\kappa a_j)$ (implemented in scaled form so the exposure remains in the outcome's units). For Quantile, the profile parameter is the quantile level $q$ defining the weighted peer quantile norm. BRUZ and GEO can select different profile minimizers; therefore the table reports Param(BRUZ) and Param(GEO) separately.
\end{tablenotes}
\end{threeparttable}
\end{table}

\section{Implementation Notes}
\label{sec:implementation}

This section provides a replicable recipe for taking the model to data.
The goal is not to prescribe a single “correct” pipeline, but to lay out default steps that mirror standard applied practice
while making the construction of geometry-induced instruments transparent.

\paragraph{Objects.}
Fix a group index $s$ (e.g.\ school) and let $G_s$ denote the observed network (or any baseline connectivity object).
The empirical outcome equation features an endogenous peer-exposure index
$E_{is}(\beta;\cdot)$ (e.g.\ a CES norm, smooth-max, quantile norm), with peer-preference parameter(s) $\beta$.
Instruments are generated from the induced transport (influence) matrix
$P_s(\beta;\hat y)$, evaluated at a predetermined proxy $\hat y=m(X)$, and then transformed into
multi-step, shell-based, and torsion-weighted objects. Let $Z_{is}(\beta)$ collect all such instruments.
\paragraph{Algorithm 1 (Replication checklist).}
\begin{enumerate}
  \item \textbf{Choose a $\beta$ strategy.} Select one of:
  \begin{enumerate}
    \item \emph{(A) Joint estimation (GMM):} estimate $\beta$ together with the peer-effect parameter(s).
    \item \emph{(B) Grid sensitivity:} fix a grid $\{\beta_b\}_{b=1}^B$ and report sensitivity.
  \end{enumerate}

  \item \textbf{Cross-fit the predetermined proxy $\hat y=m(X)$.}
  Split observations into $K$ folds. For each fold $k$, estimate $m(\cdot)$ on the other folds and predict
  $\hat y^{(-k)}$ on fold $k$.

  \item \textbf{For each fold $k$ and each $\beta$ (estimated or fixed), construct transport and instruments.}
  Within each group $s$:
  \begin{enumerate}
    \item Compute the induced transport $P_s(\beta;\hat y^{(-k)})$.
    \item Build instrument families $Z_{is}(\beta)$ from $P_s(\beta;\hat y^{(-k)})$:
    multi-step transport, geodesic shells, and/or torsion-weighted objects.
    \item Construct the endogenous exposure $E_{is}(\beta;\cdot)$ entering the outcome equation.
  \end{enumerate}

  \item \textbf{Estimate the outcome equation.}
  \begin{enumerate}
    \item If the outcome equation is linear in $E_{is}$, run IV/2SLS of $y_{is}$ on $E_{is}(\beta;\cdot)$ and controls
    using $Z_{is}(\beta)$.
    \item If $\beta$ is estimated jointly or moments are nonlinear, run GMM with moments
    $\E\!\left[ Z_{is}(\beta)\,\varepsilon_{is}(\theta)\right]=0$ (optionally augmented with one-step BRUZ objects).
  \end{enumerate}

  \item \textbf{How $\beta$ is estimated in route (A) (one practical implementation).}
  \begin{enumerate}
    \item Initialize $\beta^{(0)}$ (and other parameters) and iterate:
    \item Given $\beta^{(t)}$, recompute $P_s(\beta^{(t)};\hat y^{(-k)})$, instruments $Z_{is}(\beta^{(t)})$,
    and exposures $E_{is}(\beta^{(t)};\cdot)$.
    \item Update $\theta^{(t+1)}=(\beta^{(t+1)},\ldots)$ by minimizing the GMM criterion based on the stacked moments.
  \end{enumerate}
  (In practice, a small number of outer iterations is often sufficient; convergence can be monitored via the GMM objective
  and stability of $\beta$.)

  \item \textbf{Inference and reporting.}
  Cluster at the group level $s$ at minimum. Report first-stage strength (partial $R^2$, $F$ statistics) and
  weak-IV robust tests when feasible. Under route (B), plot estimates and diagnostics over $\beta_b$.
\end{enumerate}

\begin{itemize}
  \item \textbf{What you actually estimate.}
  In applications you estimate a structural peer-effects equation (parametric or semiparametric) in which the endogenous regressor
  is a peer exposure index $E_{is}(\beta;\cdot)$, and you instrument this exposure using $Z_{is}(\beta)$ generated from
  $P_s(\beta;\hat y)$ and (optionally) shells and torsion.
  Concretely, you run either:
  \begin{enumerate}
    \item IV/2SLS, if the outcome equation is linear in the peer exposure index, or
    \item GMM, if $\beta$ is estimated jointly with the peer-effect coefficient(s), or if you impose nonlinear moment restrictions.
  \end{enumerate}
  Relative to one-step approaches (e.g.\ BRUZ-type objects), the incremental step is that instruments are built from the
  induced transport $P_s(\beta;\hat y)$ and its geometry (multi-step propagation, shells, torsion).

  \item \textbf{Choosing or estimating $\beta$ (peer preference).}
  Two practical approaches cover most empirical workflows.

  \emph{(A) Structural route: estimate $\beta$ (BRUZ-style, strengthened).}
  Treat $\beta$ as structural and estimate it by GMM jointly with the peer-effect parameter(s).
  Practically, define moments of the form $\E[z_{is}(\theta)\,\varepsilon_{is}(\theta)]=0$ where $z_{is}(\theta)$ includes
  one-step objects (e.g.\ $\Phi_i(\hat y;\beta)$ and $\partial_\beta \Phi_i(\hat y;\beta)$) and augment them with the
  geometry instruments $Z_{is}(\beta)$ to strengthen identification and improve precision.

  \emph{(B) Transparent route: sensitivity over a grid.}
  Fix a grid $\beta\in\{\beta_1,\dots,\beta_B\}$ (e.g.\ from strongly min-like to strongly max-like),
  compute instruments for each $\beta_b$, and report how the estimated peer effect and first-stage diagnostics vary with $\beta$.
  When $\beta$ is weakly identified, this robustness display is often more persuasive than a single noisy point estimate.

  \item \textbf{Constructing the predetermined proxy $\hat y=m(X)$.}
  The role of $\hat y$ is to provide a predetermined proxy for outcomes so that the induced influence weights
  $P_s(\beta;\hat y)$ are functions of observables only.

  \emph{Baseline (simple, replicable).} Use an OLS predictor
  \[
    \hat y_{is} = x_{is}'\hat\pi,
  \]
  where $\hat\pi$ is estimated on the relevant sample (or on a training subsample; see cross-fitting below).

  \emph{Flexible option (if needed).} Use richer $m(x)$ (lasso, random forest, boosting, etc.) if you have many covariates
  and care about predictive power. For validity, the key requirement is that $\hat y$ is measurable with respect to observables;
  with cross-fitting, you avoid leaking an observation's own outcome into its own instrument through the estimation of $m(\cdot)$.

  \emph{Good practice.} Include the same controls $x_{is}$ in the structural equation that you used in $m(x)$ (or a superset),
  so any direct effect of observables is absorbed and instruments primarily shift outcomes through peers.

  \item \textbf{Generated instruments and inference (why sample splitting helps).}
  Instruments are generated because $\hat y$ is estimated and $P_s(\beta;\hat y)$ depends on $\hat y$.
  This is not problematic in principle, but it can complicate inference if $m(\cdot)$ is very flexible.

  \emph{Default solution: sample splitting / cross-fitting.}
  \begin{enumerate}
    \item Split the sample into $K$ folds (often $K=2$ is sufficient in practice).
    \item For fold $k$, estimate $\hat y=m(X)$ using all observations \emph{not} in fold $k$.
    \item Compute instruments $Z_{is}(\beta)$ for observations in fold $k$ using the out-of-fold $\hat y$.
    \item Stack folds and run IV/GMM using the cross-fitted instruments.
  \end{enumerate}
  Intuition: the observation’s own outcome does not enter its own instruments through the training of $m(\cdot)$.
  This places the asymptotics closer to the textbook “instruments are predetermined” case and matches standard practice
  with generated regressors/instruments in modern applied work.

  \item \textbf{Computing geodesics and torsion (algorithmic details).}
  All geometry objects are computed \emph{within each group} $s$ using the induced influence matrix $P_s(\beta;\hat y)$.

  \emph{Geodesics (effective-distance shells).}
  \begin{enumerate}
    \item Define directed edge lengths
    \[
      \ell_{ij,s}(\beta;\hat y) = -\log\!\big(P_{ij,s}(\beta;\hat y)+\varepsilon_0\big),
    \]
    where $\varepsilon_0$ is small (e.g.\ $10^{-8}$) to avoid $\log(0)$.
    \item For each node $i$, run Dijkstra’s algorithm (or any shortest-path routine) on the weighted directed graph
    to compute effective distances $d_{\beta,s}(i,j)$ to all $j$ in group $s$.
    \item Form shells $\mathcal{S}_{is}(h)=\{j:\ d_{\beta,s}(i,j)\in(h-1,h]\}$ and compute shell instruments
    \[
      Z^{\mathrm{shell}}_{is,h}(\beta)=\sum_{j\in\mathcal{S}_{is}(h)} x_{js}.
    \]
  \end{enumerate}
  \emph{Defaults and notes.} (i) Start shells at $h\ge 2$ to preserve exclusion relative to direct-peer terms; (ii) in large groups,
  cap the maximum distance and/or the number of visited nodes for speed; (iii) typical empirical implementations consider
  $h=2,\dots,H$ for modest $H$ (e.g.\ $H\in[4,8]$), reporting sensitivity to $H$.

  \emph{Torsion (path non-redundancy).}
  The raw wedge definition uses triples $(i,j,k)$ and can be computationally heavy in large groups.
  Two practical implementations are:
  \begin{enumerate}
    \item \emph{Wedge-based (denser networks).}
    For each $i$, iterate over $j$ with $P_{ij,s}>0$ and $k$ with $P_{jk,s}>0$, compute
    $\tau_{i,j,k,s}(\beta)=\big|P_{ik,s}-P_{ij,s}P_{jk,s}\big|$, and accumulate the torsion-weighted exposure
    \[
      Z^{\tau}_{is}(\beta)=\sum_{j,k} P_{ij,s}P_{jk,s}\tau_{i,j,k,s}(\beta)\,x_{ks}.
    \]
    \item \emph{Competing-short-paths (sparser networks).}
    For each $(i,k)$, compare the direct weight $P_{ik,s}$ to the strongest (or shortest-distance) two-step route through some $j$.
    Use the discrepancy as a torsion weight and construct an exposure that emphasizes pairs where two-step propagation
    is not well summarized by one-step propagation.
  \end{enumerate}
  Intuition: torsion is large exactly when higher-order propagation contains independent variation, which is where
  multi-step instruments are most likely to strengthen the first stage.

  \item \textbf{Dependence and standard errors.}
  Peer-effects data are typically clustered by group $s$ (e.g.\ common shocks, policies, cohort effects) and can also exhibit
  within-group network dependence.

  \emph{Minimum: cluster by group.}
  Use cluster-robust standard errors at the group level ($s$). This is the default in most peer-effects applications.

  \emph{If groups are large and network dependence is strong.}
  Consider variance estimators that allow correlation to decay with network distance, or cluster by network communities within groups.
  A practical justification is to show residual correlation decays with effective distance $d_{\beta,s}$ (computed above).

  \emph{Reporting.}
  Regardless of the SE choice, report first-stage strength diagnostics (e.g.\ partial $R^2$, first-stage $F$ statistics, and
  weak-instrument-robust tests where feasible). The empirical promise of geometry-induced instruments is relevance:
  they align excluded variation with the model’s induced propagation operator $P_s(\beta;\hat y)$ rather than with adjacency alone.
\end{itemize}
\section{Conclusion}\label{sec:conclusion}

This paper offers a unifying perspective on peer-effects identification: propagation is governed by a transport operator induced by the peer aggregator.
The linear-in-means model is the constant-transport benchmark ($P=G$), which recovers the familiar $G^kX$ instrument families and clarifies their well-known fragilities under transitivity, regularity, and path redundancy.
By contrast, peer-preference aggregators---such as CES norms and their extensions---generate a state-dependent Jacobian transport operator.

Its powers $P^k$, along with geometry-induced objects (shells/geodesics and torsion), produce new excluded-variation families tailored to how influence actually flows under the maintained aggregator.

The main implication is practical: identification can be strengthened by moving from one-step summaries to \emph{multi-step, influence-consistent} propagation.
Because geometry-based instruments exploit marginal influence heterogeneity and non-redundant paths, they can remain informative precisely in settings where one-step scalar moments (e.g., BRUZ-type summaries) or algebraic powers of $G$ become nearly collinear and weak.
More broadly, the framework makes explicit that ``peer effects'' depend on the behavioral primitive used to map peer outcomes into exposure; different aggregators induce different transports and therefore different estimands.
This transport view provides a constructive route to diagnosis (via profile curvature and first-stage strength) and to design (via instrument menus that match the implied influence geometry), and it suggests natural extensions to alternative aggregators such as attention (smooth-max) and rank norms (quantiles).

\newpage

\bibliographystyle{aer} 
\bibliography{bibALL}   
\appendix

\section{Appendix: additional aggregator class and technical bounds}
\label{app:aggregators_and_bounds}

This appendix adds an additional peer-aggregator category (quantile/median norms) and provides easy-to-check bounds
that verify Assumption~\ref{ass:lipschitz_main} for the workhorse CES and smooth-max cases.

\subsection{Loss-minimizer norms (median/quantiles/trimmed means)}
\label{app:loss_minimizer}

A natural economic primitive is that a peer ``norm'' is the action level that minimizes expected social disapproval,
modeled as a loss function. Fix a convex loss $\ell:\mathbb{R}\to\mathbb{R}_+$ and define
\begin{equation}\label{eq:loss_norm}
\Phi_i^{\text{LM}}(a_{-i};G)
\;\in\;
\arg\min_{m\in M}\ \sum_{j\neq i} g_{ij}\,\ell(m-a_j),
\end{equation}
where $M\subset\mathbb{R}$ is compact.

\paragraph{Examples.}
\begin{itemize}
\item If $\ell(u)=u^2$, then $\Phi_i^{\text{LM}}$ is the weighted mean $(Ga)_i$.
\item If $\ell(u)=|u|$, then $\Phi_i^{\text{LM}}$ is the weighted median.
\item If $\ell(u)=u(\tau-\mathbf{1}\{u<0\})$ (check loss), then $\Phi_i^{\text{LM}}$ is the weighted $\tau$-quantile.
\end{itemize}

\paragraph{Why this matters for nonparametric peer effects.}
Loss-minimizer norms allow peer exposure to depend on order statistics (median/quantiles) rather than averages.
This is a natural route to ``nonlinear'' and nonparametric peer effects that remain economically interpretable.

\paragraph{Existence with set-valued aggregators.}
When $\ell$ is convex, the minimizer set in \eqref{eq:loss_norm} is nonempty and convex.
Thus $\Phi_i^{\text{LM}}$ can be treated as a set-valued correspondence.
The existence theorem (Theorem~\ref{thm:existence_norm_game}) continues to hold under the standard set-valued extension:
assume $\Phi_i^{\text{LM}}$ is upper hemicontinuous with nonempty convex values, and $U_i$ is concave in $a_i$.
Then Kakutani yields equilibrium.

\paragraph{Replicable verification (upper hemicontinuity).}
Define
\[
Q_i(m;a_{-i})=\sum_{j\neq i} g_{ij}\,\ell(m-a_j).
\]
If $\ell$ is continuous, then $Q_i$ is continuous in $(m,a_{-i})$.
If $M$ is compact, Berge's maximum theorem implies the argmin correspondence is upper hemicontinuous and nonempty.
Convexity of $\ell$ implies convexity of the argmin set.
Hence the conditions needed by Kakutani are satisfied.

\subsection{Lipschitz bounds for CES and smooth-max norms}
\label{app:lipschitz_bounds}

This subsection gives sufficient conditions under which the aggregator classes satisfy a Lipschitz bound of the form
\eqref{eq:lipschitz_phi}.

\subsubsection{Smooth-max (log-sum-exp) bound}
Fix $\kappa>0$ and define $\Phi_i^{\text{SM}}$ as in \eqref{eq:smoothmax_main}.
For any vectors $a,a'$,
\begin{align}
\left|\Phi_i^{\text{SM}}(a_{-i})-\Phi_i^{\text{SM}}(a'_{-i})\right|
&=
\left|\frac{1}{\kappa}\log\frac{\sum_{j\neq i} g_{ij}e^{\kappa a_j}}{\sum_{j\neq i} g_{ij}e^{\kappa a'_j}}\right|
\;\le\;
\|a-a'\|_\infty. \label{eq:sm_lipschitz}
\end{align}
Thus one can take $L_\Phi=1$ under the $\ell_\infty$ norm.
(Proof sketch: log-sum-exp is 1-Lipschitz in $\ell_\infty$ because its gradient is a probability vector.)

Combining \eqref{eq:sm_lipschitz} with the logit mapping in \eqref{eq:bd_prob_main} yields a contraction condition:
since $\sup_t\Lambda'(t)\le 1/4$, the composite map has Lipschitz constant at most $|J|/4$.
Therefore uniqueness holds whenever $|J|<4$ (or more generally $|J|L_\Phi<4$).

\subsubsection{CES bound on a compact, positive domain}
Fix $\beta\neq 0$ and suppose actions lie in $A_i\subset[\underline{a},\bar a]$ with $0<\underline{a}<\bar a<\infty$.
Then $\Phi_i^{\text{CES}}(\cdot;\beta)$ is continuously differentiable in $a_{-i}$.
By the mean value theorem,
\begin{equation}\label{eq:ces_mvt}
\left|\Phi_i^{\text{CES}}(a_{-i})-\Phi_i^{\text{CES}}(a'_{-i})\right|
\le
\sum_{j\neq i}\sup_{\tilde a}\left|\frac{\partial \Phi_i^{\text{CES}}(\tilde a_{-i})}{\partial a_j}\right|\cdot |a_j-a'_j|.
\end{equation}
A direct differentiation gives, for $j\neq i$,
\begin{equation}\label{eq:ces_deriv_app}
\frac{\partial \Phi_i^{\text{CES}}(a_{-i})}{\partial a_j}
=
g_{ij}
\left(\sum_{k\neq i} g_{ik}a_k^\beta\right)^{\frac{1}{\beta}-1}
a_j^{\beta-1}.
\end{equation}
On $[\underline{a},\bar a]$, the term $a_j^{\beta-1}$ is bounded and the bracketed term is bounded away from $0$.
Hence the derivative is uniformly bounded:
\[
\sup_{\tilde a\in [\underline{a},\bar a]^n}\left|\frac{\partial \Phi_i^{\text{CES}}(\tilde a_{-i})}{\partial a_j}\right|
\le C(\underline{a},\bar a,\beta)\, g_{ij},
\]
for some finite constant $C(\underline{a},\bar a,\beta)$.
Plugging into \eqref{eq:ces_mvt} yields a Lipschitz bound of the form
\[
\left|\Phi_i^{\text{CES}}(a_{-i})-\Phi_i^{\text{CES}}(a'_{-i})\right|
\le
C(\underline{a},\bar a,\beta)\,\|a-a'\|_\infty,
\]
because $\sum_{j\neq i} g_{ij}=1$.
Thus one may take $L_\Phi=C(\underline{a},\bar a,\beta)$ under $\ell_\infty$.

\paragraph{Implication for uniqueness.}
If the best-response sensitivity \eqref{eq:lipschitz_b} holds with constant $L_b$,
then Theorem~\ref{thm:unique_contraction_main} implies uniqueness whenever
$L_b\,C(\underline{a},\bar a,\beta)<1$.

\subsection{Replicable mapping to Brock--Durlauf probability fixed points}
\label{app:bd_mapping}

For binary actions, pure equilibria may fail to exist, but probability equilibria are natural and standard.
Let $p_i\in[0,1]$ denote the probability that $i$ chooses action $1$.
Assume the latent utility difference is
\[
\Delta u_i = x_i'\gamma + J\,\Phi_i(p_{-i};G,\theta) + \eta_i,
\]
with i.i.d.\ logistic shocks $\eta_i$.
Then the choice probability is
\[
p_i = \Pr(\Delta u_i\ge 0) = \Lambda\!\Big(x_i'\gamma + J\,\Phi_i(p_{-i};G,\theta)\Big),
\]
which is exactly \eqref{eq:bd_prob_main}.
Define the mapping $T:[0,1]^n\to [0,1]^n$ by $(T(p))_i=\Lambda(x_i'\gamma+J\Phi_i(p_{-i}))$.
Continuity of $\Phi$ implies continuity of $T$, so Brouwer yields existence of a fixed point $p^\star$.
Uniqueness follows by the contraction logic using $\sup\Lambda'\le 1/4$ and a Lipschitz bound for $\Phi$
(as shown for smooth-max in \eqref{eq:sm_lipschitz}).

\end{document}